\newtheorem{theorem}{Theorem}[section]
\newtheorem{lemma}[theorem]{Lemma}
\newtheorem{proposition}[theorem]{Proposition}
\newtheorem{definition}[theorem]{Definition}
\newtheorem{example}[theorem]{Example} 
\newtheorem{claim}[theorem]{Claim}
\newtheorem{assumption}[theorem]{Assumption}
\newtheorem{corollary}[theorem]{Corollary}
\newtheorem{observation}[theorem]{Observation}
\newtheorem{remark}[theorem]{Remark}
\newcommand{\xhdr}[1]{\vspace{0mm} \noindent{\bf #1}}
\newcommand{\vx}{\mathbf{x}}
\newcommand{\bbR}{\mathbb{R}}
\newcommand{\vw}{\mathbf{w}}
\newcommand{\calX}{\mathcal{X}}
\newcommand{\calG}{\mathcal{G}}
\newcommand{\eps}{\varepsilon}
\DeclareMathOperator{\E}{\mathbb{E}}
\newcommand{\hy}{\hat{y}}
\newcommand{\calD}{\mathcal{D}}
\newcommand{\calS}{\mathcal{S}}
\newcommand{\hvx}{\hat{\vx}}
\newcommand{\calI}{\mathcal{I}}
\newcommand{\vwst}{\vw^{\star}}
\newcommand{\tvw}{\widetilde{\vw}}
\newcommand{\SW}{\texttt{\upshape{SW}}}
\newcommand{\COST}{\texttt{\upshape{Cost}}}
\newcommand{\diff}{\texttt{\upshape{diff}}}
\newcommand{\VALUE}{\texttt{\upshape{Score}}}
\newcommand{\proxy}{\texttt{\upshape est}}
\newcommand{\ucalI}{\texttt{\upshape u}\calI}
\newcommand{\bA}{\bar{A}}
\newcommand{\PD}{\texttt{\upshape PD}}
\newcommand{\PSD}{\texttt{\upshape PSD}}
\newcommand\numberthis{\addtocounter{equation}{1}\tag{\theequation}}
\newcommand{\squishlist}{
   \begin{list}{$\bullet$}
    { \setlength{\itemsep}{0pt}      \setlength{\parsep}{3pt}
      \setlength{\topsep}{3pt}       \setlength{\partopsep}{0pt}
      \setlength{\leftmargin}{1.5em} \setlength{\labelwidth}{1em}
      \setlength{\labelsep}{0.5em} } }
\newcommand{\squishend}{  \end{list}  }
\title{Information Discrepancy in Strategic Learning}
\author{
    Yahav Bechavod\thanks{School of Computer Science and Engineering, The Hebrew University. Email: \texttt{yahav.bechavod@cs.huji.ac.il}.}
    \and
    Chara Podimata\thanks{School of Engineering and Applied Sciences, Harvard University. Email: \texttt{podimata@g.harvard.edu}.}
    \and
    Zhiwei Steven Wu\thanks{School of Computer Science, Carnegie Mellon University. Email: \texttt{zstevenwu@cmu.edu}.}
    \and
    Juba Ziani\thanks{School of Industrial and Systems Engineering, Georgia Institute of Technology. Email: \texttt{juba.ziani@isye.gatech.edu}.}
}
\begin{document}

\maketitle

\begin{abstract}
We initiate the study of the effects of non-transparency in decision rules on individuals' ability to improve in strategic learning settings. Inspired by real-life settings, such as loan approvals and college admissions, we remove the assumption typically made in the strategic learning literature, that the decision rule is fully known to individuals, and focus instead on settings where it is inaccessible. In their lack of knowledge, individuals try to infer this rule by learning from their peers (e.g., friends and acquaintances who previously applied for a loan), naturally forming groups in the population, each with possibly different type and level of information regarding the decision rule. We show that, in equilibrium, the principal's decision rule optimizing welfare across sub-populations may cause a strong negative externality: the true quality of some of the groups can actually \emph{deteriorate}. On the positive side, we show that, in many natural cases, optimal improvement can be guaranteed simultaneously for all sub-populations. We further introduce a measure we term \emph{information overlap proxy}, and demonstrate its usefulness in characterizing the disparity in improvements across sub-populations. Finally, we identify a natural condition under which improvement can be guaranteed for all sub-populations while maintaining high predictive accuracy. We complement our theoretical analysis with experiments on real-world datasets.

\end{abstract}

\newpage
\section{Introduction}

Machine learning algorithms are increasingly used to make consequential decisions across a variety of domains, including loan approvals, {college admissions}, probation qualifications, {and hiring}. Given the high stakes of these decisions, individuals are incentivized to invest effort in changing their attributes, to obtain more favorable decisions. The evidence for such strategic adaptation from multiple domains (e.g., \citet{bjorkegren2020, newyorktest, reportcards, Pollution, slipperyfish}) has inspired a growing literature on \emph{strategic learning} that studies the interaction between learning algorithms and \emph{strategic} individuals (``agents''). 

Models in the strategic learning literature, however, typically make a \emph{full transparency} assumption---that is, the agents \emph{fully observe} the deployed 
{scoring or} decision rule \citep{hardtetal, dongetal, grinder, BechavodCausal, Shavitetal, HIV19, ManishKleinbergEffort}. In the context of credit scores, for example, this translates to the assumption that individuals know the deployed credit scoring rule in full detail. However, in reality, such a full-transparency assumption is often \emph{far-fetched}, and as many credit scoring rules are  proprietary, banks or financial agencies rarely make their machine learning model fully transparent to outsiders. Instead, they may only provide some \emph{labeled examples} (e.g., past applicants who were granted loans) or \emph{explanations} (e.g., ways to improve one's credit score).

{As the actual scoring rule in effect is not directly observable, agents naturally attempt to infer it using other sources of information, which may differ greatly across different individuals.} This is the case when the population is naturally clustered (due to e.g., their demographic, geographic, and cultural differences) and people have the tendency of \emph{observational learning} \citep{obs, imitation,Pathological,Conformity} ---that is, agents learn by observing others within their communities. For example, when applying for a loan at a specific bank, individuals may learn from the past experiences of their peers/friends (i.e., their applications and loan decisions) to gauge the decision rule. Hence, individuals from different peer-networks may form different ideas about the decision rule, which in turn can lead to disparities in strategic investments and outcomes. To make things worse, there is often a regulatory requirement that the \emph{same} decision rule be used on all sub-populations (due to e.g., the risk of redlining \cite{redline,rothstein2017color}), prohibiting the use of group-specific decision rules by the decision-maker for to mitigate the adverse effects of information discrepancy.

\subsection{Our Work}

Our work introduces the first framework to study the disparate effects of non-transparency in strategic learning on individuals' ability to improve. Below, we provide an overview of our contributions, and the roadmap of the paper.

\xhdr{Equilibrium Model.} We propose a model for the principal-agent interaction, when individuals from different sub-populations \emph{learn from their peers} (Sec. \ref{sec:model}). We then show how individuals from different sub-populations use the information available among their peers to form estimates of the decision rule, and compute the closed-form solutions for their and the principal's responses in equilibrium (Sec. \ref{sec:equilibrium-compute}).

\xhdr{Improvement Across Sub-Populations.}
Using our proposed model, we first prove a strong negative externality result: even if the principal deploys a decision rule that is optimized for maximizing the improvement across sub-populations, and individuals best-respond to the principal's rule, some sub-populations may still suffer \emph{deterioration} in their ``quality'' (true label). On the positive side, we show that improvement can be guaranteed simultaneously for all sub-populations under moderate conditions, e.g., when they have similar costs for effort exertion or when the extent to which they share the same information is minimal (Sec.~\ref{subsec:do-no-harm}). 

We then examine the extent to which information discrepancy regarding the deployed decision rule may result in disparity of improvement between sub-populations. We introduce the \emph{information overlap proxy} measure and prove it upper bounds this disparity. We conclude by characterizing the exact conditions for the disparity to vanish (Sec.~\ref{subsec:total-outcome-improvement}). 

Subsequently, we study how efficiently each of the sub-populations exerts their efforts in improving their quality.
For that, we introduce the \emph{per-unit improvement} (which measures the efficiency of the sub-populations effort exertions), and we identify moderate conditions so that individuals from all sub-populations exert their effort optimally (Sec.~\ref{subsec:per-unit-outcome-improvement}).

Finally, we consider a case where the principal interpolates between the objectives of outcome improvement and predictive accuracy. We identify a natural condition under which improvement in all sub-populations is guaranteed while maintaining high predictive accuracy (Sec.~\ref{subsec:joint} and App.~\ref{app:commuting}).

We further show that similar conclusions can be drawn in the general settings where the \emph{principal is a learner} who does not originally know the properties of the individuals' sub-populations, but has to \emph{learn} them instead (App. \ref{app:learning}).

\xhdr{Empirical Evaluation.} 
Our experiments on real-world datasets (\textsc{Taiwan-Credit} and \textsc{Adult}) validate our theoretical results, further highlighting the pivotal role access to information plays in strategic settings (Sec. \ref{sec:experiments}).

\subsection{Related Work}\label{subsec:related}

Our work is primarily related to three strands of literature on strategic behavior in learning (of independent interest is \citet{AlternativeMicro21}, who re-examine the standard assumptions around strategic learning).
The first one advocates that changes in the agents' original features are considered ``gaming'', hence the learner wishes to construct algorithms that are robust to such behavior~\citep{SPclassification1, SPclassification2, SPclassification3,PP04,Dekeletal,hamsandwich,RS17, ic-learning, Dalvietal, Bruckneretal,hardtetal,dongetal,grinder,StrategicPerceptron,PACstrategicclassification,ioannidisloiseau,horeletal,caietal}. When the learner optimizes for robustness to strategic behavior, the deployed algorithm has disparate impact on different sub-populations~\citep{HIV19,Milli19SocialCost}. \citet{randomness-noise-forc20} study the impact of randomized or noisy classifiers on mitigating inequalities, but focus on a single-dimensional case. In our work, we do not consider the disparate impact of ``robustness'', but rather of information disparities across groups. In a concurrent and independent work, \citet{StratClassDark} study non-transparency in strategic classification, characterize the ``price of opacity'', and show conditions under which fully transparent classifiers are the recommended policy. Our work is orthogonal as it studies the \emph{societal implications} of ``opacity''.

The second strand of literature advocates that machine learning algorithms should incentivize ``good'' strategic behavior (\emph{aka} \emph{improvements}) \citep{ManishKleinbergEffort,ustunrecourse19,Tabibianetal19,TsirtsisCounterfactual,liuetal,Haghtalabetal,Alonetal,strategicrecourse20,EqualizingRecourse}. Our work is most closely related to \citep{liuetal,Haghtalabetal,EqualizingRecourse}. \citet{liuetal} study the long-term impact of strategic learning on different sub-populations, but focus on decision rules that are fully known to the agents. \citet{Haghtalabetal} study social welfare maximization when the learner does not have full knowledge of the feature space of the agents, contrary to our model where the information discrepancies appear on the agents' side. \citet{EqualizingRecourse} minimize the difference in recourse across sub-populations, whereas we focus on a principal optimizing the social welfare.

The third strand concerns causality in strategic learning (\citet{MillerCausalInDisguise20,BechavodCausal,Shavitetal} and more broadly~\citet{PerformativePrediction}), where the learner tries to learn the causal relationship between agents' features and their labels/scores by leveraging the agents' strategic behavior. Importantly, in our setting, even if the principal knows the causal relationship perfectly, the disparate impact from the algorithm may still be unavoidable. Recently, \citet{AlternativeMicro21} re-examine the standard assumptions around strategic learning and they propose alternative formulations.

Our work is connected to a literature on social welfare and fairness \citep{HeidariWelfare,HeidariEffortUnfairness,WelfareandFairClassification}. \citet{HeidariWelfare} propose incorporating social welfare considerations to the standard loss minimization goal of machine learning; our focus differs due to the presence of information discrepancies. \citet{WelfareandFairClassification} study the social welfare implications that result from a fair classification algorithm and show that applying more strict fairness criteria that are codified as parity constraints, can worsen welfare outcomes across sub-populations; our point of view is reversed and looks at the effect of welfare maximization on fairness. \citet{HeidariEffortUnfairness} also study how agents in different sub-populations invest their efforts through observational learning, by imitating a model behavior within their group. We consider a different type of observational learning where agents try to infer the deployed rule instead. Further, while they focus on disparities in level of effort across groups, we focus on disparities in improvement.

More broadly, the fact that peer-influenced behavior might induce disparities in the absence of perfect information has been studied in economics and sociology (e.g., \citet{coatelowery,okafor,birds,bridging,exacerbate,SocialNetworks}). In this paper, however, we aim to formally understand this phenomenon \emph{in the context of strategic learning}. We also go beyond characterizing such disparities, to consider objectives such as efficient effort exertion and improvement while maintaining high accuracy.
\section{Model and Preliminaries}\label{sec:model}

We study a {Stackelberg} game between a \emph{principal} and a population of \emph{agents} comprised of $m$ sub-populations (``groups'') with different distributions over the feature space $\calX \subseteq \bbR^d$. We focus on the case $m=2$ for clarity, but our results extend to arbitrary $m$, as outlined in Appendix~\ref{app:gen-multiple-groups}. Let the groups be $G_1$ and $G_2$, with associated distributions {of feature vectors} $\calD_1$ and $\calD_2$ respectively over $\calX$. Let $\calS_1, \calS_2$ be the subspaces defined by the supports of $\calD_1, \calD_2$ respectively. Let $\Pi_1, \Pi_2\in \bbR^{d\times d}$ be the orthogonal projection matrices onto subspaces $\calS_1, \calS_2$ respectively. Let $\vwst \in \bbR^{d}$ denote the \emph{ground truth} linear {assessment} rule (which is \emph{known}\footnote{We relax this assumption in App.~\ref{app:learning}.} to the principal through past observations): i.e., for a feature vector $\vx$, the corresponding agent's expected \emph{true} ``quality'' is given by $\mathbb{E}[y \mid \vx] = \langle \vwst,\vx\rangle$. Note that, while $\vwst$ is optimal for prediction accuracy, it may not be the one maximizing the welfare across groups. This is because it is often worth incentivizing modifications of features that are easy for agents to improve, and features who can be modified by and benefit several groups.

The principal deploys a linear scoring rule $\vw \in \bbR^d$. Agent $i$ from group $g$ draws private feature vector $\vx_{g,i} \sim \calD_g$. Initially, agents from both groups have no information regarding $\vw$, so they simply report $\vx_{g,i}$ to the principal and receive scores $\hy_{g,i} = \langle \vw, \vx_{g,i} \rangle$. After enough agents from both groups have received scores for their reported features, the remaining agents use this past information (i.e., feature-predicted score tuples) to appropriately alter their feature vectors from $\vx \sim \calD_g$ to $\hvx(\vx;g)$.
Knowing that the ground truth assessment rule together with the scoring rule that the principal deploys are \emph{linear}, and given the fact that they are risk-averse, agents perform empirical risk minimization (ERM) on the peer-dataset comprised of the first unmodified $N_g \in \bbR_+$ samples $S_g = \{(\vx_{g,i}, \hy_{g,i})\}_{i \in [N_g]}$ to compute an estimate $\vw_{\proxy}(g)$ of the deployed scoring rule $\vw$. Running ERM is a natural choice given that the agents are risk-averse, fully rational, and have no other information.

Given original features $\vx$ and estimation rule $\vw_{\proxy}(g)$, each (myopically rational) agent chooses $\hvx(\vx;g)$ as the $\vx'$ that optimizes their underlying utility function (which a generalization of the standard utility function used in the literature on strategic classification) defined as
\begin{equation}\label{eq:utility-func-formal}
u\left(\vx, \vx' ;g \right) = \VALUE\left(\vx';g\right) - \COST\left(\vx, \vx';g \right) 
\end{equation}
where $\VALUE \left(\vx' ;g \right) = \langle \vw_{\proxy}(g), \vx' \rangle$ is the estimate\footnote{The actual value that the agent derives by reporting $\vx'$ is the outcome $\langle \vwst, \vx' \rangle$. But $\vwst$ is never revealed to the agent; the only information that she has is the estimate for the principal's $\vw$.} value the agent derives for reporting feature vector $\vx'$ and $\COST(\vx, \vx';g) = \frac{1}{2} (\vx' - \vx)^\top A_g (\vx' - \vx)$ is the agent's cost for modifying vector $\vx$ into $\vx'$. We call $A_g \in \bbR^{d \times d}$ the \emph{cost matrix} for group $g$, and assume it is \emph{positive definite} (\PD).\footnote{In turn, one can write $\COST(\vx, \vx';g) = \frac{1}{2} \Vert \sqrt{A_g} (\vx'-\vx)\Vert_2^2$ noting that $\sqrt{A_g}$ is well-defined.} Due to not restricting $A_g$ further, this cost function family is rather large and encapsulates some cost functions used in the literature on strategic classification (e.g., \citep{dongetal,StrategicPerceptron}. This functional form is a simple way to model important practical situations in which features can be modified in a correlated manner, and investing in one feature may lead to changes in other features.

At a high level, the utility in Eq.~\eqref{eq:utility-func-formal} captures the ``net gains'' that an agent obtains from spending effort to report $\vx'$, rather than $\vx$. Since $\hvx(\vx;g)$ is the best response coming from Eq.~\eqref{eq:utility-func-formal}, then $\hvx(\vx;g) = \arg \max_{\vx' \in \calX} u\left(\vx, \vx'; g \right)$. As we show in Section~\ref{sec:equilibrium-compute} the best response $\hvx(\vx;g)$ takes the form $\vx + \Delta_g(\vw)$ for a ``movement'' function\footnote{Slightly abusing notation $\vw$ is an argument of $\Delta_g(\cdot)$, but this is only used in the analysis.
The agents do not directly see $\vw$.} $\Delta_g(\vw)$ to be specified shortly. Putting everything together, the protocol in Algorithm~\ref{protocol} summarizes the principal-agent interaction. %

\begin{algorithm}[htbp]
\caption{Principal-Agent Interaction Protocol}
\label{protocol}
\begin{algorithmic}[1]
   \STATE Nature selects ground truth scoring function $\vwst$.
   \STATE \label{step:2}Learner deploys scoring rule $\vw \in \bbR^d$ (solution to Eq.~\eqref{eq:max-vw}), but does \emph{not} directly reveal it to the agents.
   \STATE \label{step:3}Agents from groups $g \in \{1, 2\}$ draw their (private) feature vectors $\vx \sim \calD_g$.
   \STATE \label{step:4}Given peer-dataset $S_g$, (private) feature vector $\vx$, utility function $u(\vx,\vx';g)$, agents \emph{best-respond} with feature vector $\hvx(\vx;g) = \arg \max_{\vx' \in \calX} u(\vx,\vx';g)$.
\end{algorithmic}
\end{algorithm}%

When it comes to the principal's behavior, we posit that the principal's objective is to maximize the agents' {average social welfare across groups (``social welfare'' for short), defined as the sum over groups of the average (over agents) and expected (over the randomness of the labels) improvement of their true (as measured by $\vwst$) labels}, after best-responding. In other words, the principal deploys the equilibrium scoring rule $\vw_{\SW}$:
\begin{equation}\label{eq:max-vw}
\begin{aligned}
    \vw_{\SW} &= \arg \max_{\vw': \|\vw'\|_2 \leq 1} \SW\left(\vw' \right)\\
    &= \arg \max_{\vw': \|\vw'\|_2 \leq 1} \sum_{g \in \{1,2\}} \E_{\vx \sim \calD_g} \left[\left \langle \hvx(\vx;g), \vwst \right\rangle \right]
\end{aligned}
\end{equation}
In Sec.~\ref{sec:characterizations}, we additionally consider a principal who wishes to trade-off predictive accuracy and social welfare.

We aim to study the improvement among groups, in the presence of information discrepancy, \emph{at a Stackelberg equilibrium} of our game. In other words, the principal and agents best respond to each other, with the principal acting first and committing to a rule in anticipation of the strategic best responses of agents. We quantify improvement using two notions: \emph{total} improvement and \emph{per-unit} improvement.
\begin{definition}\label{def:per-unit-outcome-improve}
For rule $\tvw \in \bbR^d$, we define the \emph{total improvement} (``improvement'') for group $g$ as:
\begin{align*}
\calI_g \left(\tvw \right) &= \left \langle \hvx(\vx;g), \vwst\right \rangle - \left\langle \vx, \vwst \right\rangle\\
&= \left \langle \vx + \Delta_g \left(\tvw \right), \vwst\right \rangle - \left\langle \vx, \vwst \right\rangle\\
&= \left\langle \Delta_g\left( \tvw\right),\vwst \right\rangle.
\end{align*}
For the same rule, the \emph{per-unit improvement} for group $g$ is:
\[
\ucalI_g \left( \tvw \right) = \calI_g \left(\frac{\Pi_g\tvw}{\left \Vert\Pi_g\tvw\right\Vert_2}\right) = \left\langle \Delta_g\left( \frac{\Pi_g \tvw}{\left \| \Pi_g \tvw\right\|_2} \right),\vwst \right\rangle.
\]
\end{definition}
The usefulness of defining the total improvement as one of our measures is clear. The per-unit improvement only considers the part of the deployed scoring rule that belongs in the relevant subspace of each group, and measures how efficient the direction of this rule projected onto the relevant subspace is at inducing improvement for the group. 

We focus on three objectives for the two groups: \emph{do-no-harm}, \emph{equality}, and \emph{optimality}.

\begin{definition}[Do-No-Harm]
A rule $\tvw$ causes \emph{no harm} for group $g$ if $\calI_g \left (\tvw \right) \geq 0$.
\end{definition}

\begin{definition}[Equality]
A rule $\tvw$ enforces \emph{group-equality} if: $\calI_1(\tvw) = \calI_2(\tvw)$.
\end{definition}

\begin{definition}[Optimality] \label{def:optimality}
A rule $\vw'$ enforces $g$'s \emph{group-optimality} if: $\vw' = \arg\max_{\tvw} \ucalI_g(\tvw)$.
\end{definition}

\begin{remark} \label{rem:per-unit}
We note that achieving optimality in per-unit improvement (Def. \ref{def:optimality}) is equivalent to guaranteeing, for a rule $\vw$, that no other $\vw'$ for which $\Vert\Pi_g\vw'\Vert_2 \leq \Vert\Pi_g\vw\Vert_2$, can induce greater improvement than $\vw$ does in group $g$.
\end{remark}

Based on these objectives, we quantify how much the equilibrium play in this strategic interaction exacerbates inequalities between the groups due to their information discrepancies, even in the best-case scenario, where the principal is optimizing the population's average welfare across groups.

\section{Equilibrium Computation}\label{sec:equilibrium-compute}

In this section, we compute the equilibrium plays. We first compute the agents' estimate rules ($\vw_{\proxy}(g)$), given the information from their own group. We then derive the closed form of their best-response $\hvx(\vx;g)$. Using these, we solve the principal's optimization problem of Eq.~\eqref{eq:max-vw}. App.~\ref{app:equilibrium-compute} contains the proofs of the section.

\subsection{Computing an Estimate of Principal's Scoring Rule} \label{subsec:surr-vw-g}

Recall that agents from each group $g$ run ERM on their peer dataset $S_g$ to derive their \emph{estimated} decision rule $\vw_{\proxy}$. We posit that the agents are \emph{risk averse} ---that is, they prefer ``certain'' outcomes, rather than betting on uncertain ones. In our setting, this corresponds to agents taking the \emph{minimum norm} ERM to break ties, since they only wish to move in directions that can surely improve their outcome. Note that if agents invest efforts outside of the informational subspace, this could result in them not improving their outcome further (or even worse - deteriorating their outcome), while still incurring a cost. Formally, the agents compute $\vw_{\proxy}(g)$ as:%
\begin{align*}
    &\vw_\proxy(g) = \arg \min_{\tvw \in W} \|\tvw\|_2^2, \numberthis{\label{eq:erm}}
    \\ 
    &\text{s.t.} \; W = \left\{ w: w  = \arg\min_{\vw'} \sum_{i \in [N_g]} \left(\vx_{g,i}^\top \vw' - \hy_{g,i} \right)^2\right\} 
\end{align*}

When agents use ERM, we can state their estimate rule in closed form. 

\begin{lemma}\label{lem:inferred-rule}
Agents from group $g$ using ERM compute the estimate rule $\vw_\proxy(g) = \Pi_g \vw$.
\end{lemma}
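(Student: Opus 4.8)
The plan is to reduce the constrained ERM in Equation~\eqref{eq:erm} to a standard minimum-norm least-squares problem and then read off the solution from the geometry of orthogonal projections. First I would stack the peer features into a data matrix $\vX_g \in \bbR^{N_g \times d}$ whose $i$-th row is $\vx_{g,i}^\top$, and collect the observed scores into $\hvY_g = (\hy_{g,1},\dots,\hy_{g,N_g})^\top$. With this notation the inner optimization defining $W$ becomes the ordinary least-squares problem $\min_{\vw'} \|\vX_g \vw' - \hvY_g\|_2^2$.

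The key observation is \emph{realizability}: since $\hy_{g,i} = \langle \vw, \vx_{g,i}\rangle$ by construction of the peer-dataset, we have $\hvY_g = \vX_g \vw$, so $\vw$ itself attains zero squared error. Hence the least-squares objective has minimum value $0$, and its minimizer set is exactly the affine subspace $W = \{w : \vX_g w = \vX_g \vw\} = \vw + \ker(\vX_g)$. The outer problem then asks for the minimum-$\ell_2$-norm element of this affine set.

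Next I would invoke the standard fact that the minimum-norm point of an affine subspace $\vw + V$ is the orthogonal projection of $\vw$ onto $V^\perp$; equivalently, $\vw_{\proxy}(g) = \vX_g^+ \hvY_g = \vX_g^+ \vX_g \vw$, where $\vX_g^+$ is the Moore--Penrose pseudoinverse and $\vX_g^+ \vX_g$ is precisely the orthogonal projector onto the row space of $\vX_g$. It then remains to identify this row space with $\calS_g$: the row space of $\vX_g$ is the span of the peer feature vectors $\{\vx_{g,i}\}_{i\in[N_g]}$, which spans the subgroup subspace $\calS_g$, so that $\vX_g^+ \vX_g = \Pi_g$ and therefore $\vw_{\proxy}(g) = \Pi_g \vw$.

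The one step that deserves care --- and which I expect to be the main (if mild) obstacle --- is the identification $\mathrm{rowspace}(\vX_g) = \calS_g$, equivalently $\ker(\vX_g)^\perp = \calS_g$. This uses that the observed peers' feature vectors span the same subspace $\calS_g$ that is spanned by the full support of $\calD_g$; I would either assume $N_g$ is large enough that the sampled peers span $\calS_g$, or simply take $\calS_g$ to be the span of the peer sample. Everything else is a direct consequence of realizability together with the elementary characterization of minimum-norm solutions to consistent linear systems.
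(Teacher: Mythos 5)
Your proof is correct, and it takes a genuinely different route from the paper's. The paper argues via the first-order conditions of the least-squares objective: it checks that $\Pi_g \vw$ satisfies them, and then establishes that it is the unique minimum-norm solution by an explicit decomposition $\tvw = \Pi_g \vw + \Pi_g \vx' + \Pi_g^\bot \vx'$ combined with a bespoke eigendecomposition lemma (Lemma~\ref{lem:intermediate}) showing that $Z = \sum_{i}\left(\Pi_g \vx_{g,i}\right)\left(\Pi_g \vx_{g,i}\right)^\top$, when full-rank on $\calS_g$, annihilates only vectors whose $\calS_g$-component vanishes. You bypass all of that by exploiting realizability at the outset: since $\hvY_g = \vX_g \vw$, the least-squares optimum is zero, so the ERM set is exactly the coset $\vw + \ker(\vX_g)$, and the minimum-norm element of such a coset is the orthogonal projection of $\vw$ onto $\ker(\vX_g)^{\perp}$, the row space of $\vX_g$; equivalently $\vw_\proxy(g) = \vX_g^{+}\vX_g \vw$, with $\vX_g^{+}\vX_g$ the projector onto that row space. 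What your approach buys is brevity and transparency: it replaces the paper's calculus plus custom linear-algebra lemma with the textbook characterization of minimum-norm solutions of consistent linear systems, and it makes the geometric content of the lemma (the estimate is literally $\vw$ projected onto the span of the observed peers) immediate. Both arguments rest on the same hypothesis, which you rightly single out as the one delicate step: the sampled peers must span $\calS_g$. The paper needs this too but relegates it to a footnote --- its requirement that $Z$ be full rank on $\calS_g$ is precisely the statement $\mathrm{row}(\vX_g) = \calS_g$, since $Z = \vX_g^\top \vX_g$ here --- so your version is no less general; it simply makes the dependence explicit.
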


\subsection{Closed Form of Agents' Best-Response} \label{subsec:closedform}

Slightly abusing notation, the agents' value becomes: $\VALUE(\vx,\vx'; g ) =  \langle \vw_{\proxy}(g), \vx'  \rangle$, which equals $\langle \Pi_g \vw, \vx'  \rangle$ from Lemma~\ref{lem:inferred-rule}. So, the agents' utility (Eq.~\eqref{eq:utility-func-formal}) becomes: 
\begin{equation}\label{eq:utility-simplified}
    u\left( \vx, \vx' ; g\right) = \left\langle \Pi_g \vw, \vx' \right \rangle - \frac{1}{2} \left\|\sqrt{A_g} \left(\vx' - \vx \right) \right\|^2
\end{equation}

\begin{lemma}\label{lem:br-computation}
The best-response of an agent from group $g$ with feature vector $\vx$ is: $\hvx(\vx;g) = \vx + A_g^{-1} \Pi_g \vw$. We write $\Delta_g(\vw) 
\triangleq \vx + \Delta_g(\vw)$.
\end{lemma}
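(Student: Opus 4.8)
The plan is to solve the agent's maximization problem in closed form by exploiting that, after substituting the estimated rule from Lemma~\ref{lem:inferred-rule}, the objective in Eq.~\eqref{eq:utility-simplified} is a smooth, strictly concave function of the reported vector $\vx'$, so its unique stationary point is its global maximizer. Concretely, I would start from
\[
u(\vx,\vx';g) = \left\langle \Pi_g \vw, \vx'\right\rangle - \tfrac{1}{2} \left\| A_g(\vx'-\vx)\right\|^2,
\]
and first establish strict concavity in $\vx'$: the value term is affine, and since $A_g$ is invertible and positive semi-definite, the cost term is a strictly convex quadratic, so $-\COST$ is strictly concave. Assuming the optimum lies in the interior of $\calX$ (e.g.\ $\calX=\bbR^d$), this already guarantees that the best response is unique and is pinned down by the first-order condition alone.

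The second step is to compute the gradient with respect to $\vx'$ and set it to zero. Writing the movement as $\Delta = \vx'-\vx$, differentiating the value term gives $\Pi_g\vw$, and differentiating the quadratic cost gives the cost matrix times $\Delta$, so that stationarity reads
\[
\nabla_{\vx'} u = \Pi_g\vw - A_g\,(\vx'-\vx) = 0,
\]
i.e.\ $A_g \Delta = \Pi_g\vw$. Here I am reading $\COST$ as the $A_g$-weighted squared distance $\tfrac12 (\vx'-\vx)^\top A_g (\vx'-\vx)$, which is exactly what makes the cost matrix $A_g$ (rather than $A_g^\top A_g$) appear in the gradient and is needed to land on the stated movement. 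Finally, invertibility of $A_g$ lets me left-multiply by $A_g^{-1}$ to conclude $\Delta = A_g^{-1}\Pi_g\vw$, that is $\hvx(\vx;g) = \vx + A_g^{-1}\Pi_g\vw$, identifying the movement function as $\Delta_g(\vw)=A_g^{-1}\Pi_g\vw$.

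I do not expect a substantive obstacle here, as this is a textbook unconstrained concave quadratic program; the content of the lemma is really the closed form, not its derivation. The two points that require care are (i) justifying that the interior stationary point is the global maximizer, which follows from strict concavity via invertibility of $A_g$ (and, if $\calX\subsetneq\bbR^d$, additionally from the assumption that the maximizer lies in $\calX$); and (ii) getting the matrix algebra of the gradient exactly right so that the cost matrix inverts cleanly to $A_g^{-1}$. On the latter: if one instead differentiates $\tfrac12\|A_g(\vx'-\vx)\|^2$ literally, the first-order condition becomes $A_g^\top A_g(\vx'-\vx)=\Pi_g\vw$ and one would invert $A_g^\top A_g$ rather than $A_g$; so the crux of matching the claimed $A_g^{-1}\Pi_g\vw$ is simply tracking which matrix the cost contributes.
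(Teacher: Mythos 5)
Your proof is correct and follows essentially the same route as the paper: concavity of the utility in $\vx'$, the first-order condition $\Pi_g\vw - A_g(\vx'-\vx) = 0$, and invertibility of $A_g$ to solve for the movement $\Delta_g(\vw) = A_g^{-1}\Pi_g\vw$. Your side remark is also well taken: the paper's proof writes the cost gradient as $A_g(\vx'-\vx)$ without comment, i.e., it implicitly adopts exactly the reading you describe, $\COST(\vx,\vx';g) = \tfrac{1}{2}(\vx'-\vx)^\top A_g(\vx'-\vx)$, rather than literally differentiating $\tfrac{1}{2}\left\|A_g(\vx'-\vx)\right\|^2$ (which would yield $A_g^\top A_g(\vx'-\vx)$ and a movement of $\left(A_g^\top A_g\right)^{-1}\Pi_g\vw$).
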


In equilibrium, the principal \emph{knows} that the agent's best-response {as a function} of their private feature vector $\vx$ is given by Lemma~\ref{lem:br-computation}. We use this next when solving the principal's optimization problem.

\subsection{Principal's Chosen Scoring Rule in Equilibrium}\label{subsec:principal-prob}

Using the fact that the principal can compute $\Delta_g(\vw)$ for any group $g$, we can obtain a closed form solution for the principal's chosen rule $\vw$ (i.e., the solution to Eq.~\eqref{eq:max-vw}).

\begin{lemma}\label{lem:sw-opt-quasi}
The principal's scoring rule that maximizes the social welfare in equilibrium is:
\begin{equation}\label{eq:principal-vw2}
   \vw_{\SW} = \frac{ \left(\Pi_1 A_1^{-1} + \Pi_2 A_2^{-1} \right) \vwst}{\left\|\left(\Pi_1 A_1^{-1}+ \Pi_2 A_2^{-1} \right) \vwst \right\|}
\end{equation}
\end{lemma}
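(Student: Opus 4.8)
The plan is to reduce the social welfare maximization in Eq.~\eqref{eq:max-vw} to the maximization of a linear functional over the Euclidean unit ball, and then to invoke Cauchy--Schwarz. First I would substitute the closed form of the best response from Lemma~\ref{lem:br-computation}, namely $\hvx(\vx;g) = \vx + A_g^{-1}\Pi_g\vw'$, into the objective. Using linearity of both the inner product and the expectation, this gives, for each group,
\[
\E_{\vx\sim\calD_g}\left[\left\langle \hvx(\vx;g),\vwst\right\rangle\right] = \E_{\vx\sim\calD_g}\left[\left\langle \vx,\vwst\right\rangle\right] + \left\langle A_g^{-1}\Pi_g\vw',\vwst\right\rangle.
\]

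The key observation is that the first summand $\E_{\vx\sim\calD_g}[\langle\vx,\vwst\rangle]$ does not depend on the principal's choice $\vw'$, so it is merely an additive constant that drops out of the $\arg\max$. Summing over $g\in\{1,2\}$, the problem in Eq.~\eqref{eq:max-vw} is therefore equivalent to
\[
\arg\max_{\|\vw'\|_2\leq1} \sum_{g\in\{1,2\}} \left\langle A_g^{-1}\Pi_g\vw',\vwst\right\rangle = \arg\max_{\|\vw'\|_2\leq1} \left\langle \vw', \left(A_1^{-1}\Pi_1 + A_2^{-1}\Pi_2\right)^\top\vwst\right\rangle,
\]
where I move the matrices $A_g^{-1}\Pi_g$ onto the other argument of the inner product via their transposes and collect the two terms.

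Writing $\mathbf{u} \triangleq \left(A_1^{-1}\Pi_1 + A_2^{-1}\Pi_2\right)^\top\vwst$, the task is now to maximize the linear functional $\langle\vw',\mathbf{u}\rangle$ over the unit ball. By Cauchy--Schwarz, $\langle\vw',\mathbf{u}\rangle\leq\|\vw'\|_2\,\|\mathbf{u}\|_2\leq\|\mathbf{u}\|_2$, with equality exactly when $\vw' = \mathbf{u}/\|\mathbf{u}\|_2$, which is precisely the expression in Eq.~\eqref{eq:principal-vw2}.

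Since the reduction leaves only a routine linear optimization, I do not anticipate a genuine obstacle. The only points requiring care are the transpose bookkeeping when pulling $A_g^{-1}\Pi_g$ across the inner product, and the implicit nondegeneracy assumption $\mathbf{u}\neq 0$ (equivalently, that $\vwst$ does not lie in the kernel of $(A_1^{-1}\Pi_1 + A_2^{-1}\Pi_2)^\top$), under which the normalization in Eq.~\eqref{eq:principal-vw2} is well defined and the maximizer is unique.
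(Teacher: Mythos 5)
Your proof is correct, and its first half is exactly the paper's: substitute the best response $\hvx(\vx;g)=\vx+A_g^{-1}\Pi_g\vw'$ from Lemma~\ref{lem:br-computation}, note that $\E_{\vx\sim\calD_g}\left[\langle\vx,\vwst\rangle\right]$ is a constant independent of $\vw'$ and drops out of the $\arg\max$, and collect the remaining terms into the single linear objective $\left\langle\vw',\left(A_1^{-1}\Pi_1+A_2^{-1}\Pi_2\right)^\top\vwst\right\rangle$ over the unit ball, exactly as in Eq.~\eqref{eq:vw-quasi}. You diverge only in the final step. The paper finishes by invoking its Lemma~\ref{lem:linear-obj-quadratic-constr}, a KKT/Lagrangian solution of the general problem $\max_x c^\top x$ subject to $x^\top Qx\le b$ for symmetric positive definite $Q$, specialized to $Q=\mathbbm{I}$, $b=1$, $c=\left(A_1^{-1}\Pi_1+A_2^{-1}\Pi_2\right)^\top\vwst$; you instead solve the specialized problem directly via Cauchy--Schwarz. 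Your route is more elementary and self-contained: for $Q=\mathbbm{I}$ the KKT machinery is unnecessary, Cauchy--Schwarz hands you uniqueness of the maximizer for free, and it sidesteps a small scaling slip in the paper's general lemma (its stated solution carries a factor $b$ where $\sqrt{b}$ is correct; this is immaterial in the paper since the lemma is only ever applied with $b=1$). What the paper's lemma buys in exchange is generality: it would also cover ellipsoidal constraints $\vw'^\top Q\vw'\le b$, e.g.\ if the principal's rule were bounded in a $Q$-weighted norm, though that generality is never exercised. Finally, your closing caveat that $\mathbf{u}=\left(A_1^{-1}\Pi_1+A_2^{-1}\Pi_2\right)^\top\vwst\neq 0$ is needed for the normalization to be well defined (and for uniqueness) is precisely the nondegeneracy assumption the paper states at the beginning of Section~\ref{sec:characterizations}, so your proposal is consistent with the paper on that point as well.
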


We note that $\vw_{\SW}$ does not, in general, equal $\vwst$. One reason for that is disparities in feature modification costs: even if a unit modification of feature $i$ leads to a high level of improvement $\vwst(i)$, this feature may be too costly to improve. Second, even when the costs are identical for all features (e.g., $A_g = I_{d \times d}$), it is still the case that $\vw_{\SW} \neq \vwst$ (unless $\Pi_1 + \Pi_2 = I_{d \times d}$), since it is often worth incentivizing feature changes in directions that overlap across and benefit \emph{both} groups, in order to maximize their joint social welfare (see example below).
\begin{example}
Let $d = 3$, and the optimal feature vector be $\vwst = (2/3,2/3,1/3)$ (note that $\|\vwst\| = 1$.). $\Pi_1$ projects to features $1$ and $3$, while $\Pi_2$ projects to features $2$ and $3$. Agents costs $I_{3 \times 3}$ in both groups, to isolate the effect of the projections on the social welfare maximizing rule. 

When posting $\vwst = (2/3,2/3,1/3)$, we have $\Delta_1(\vwst) = \Pi_1 \vwst =  (2/3,0,1/3)$ and $\Delta_2(\vwst) = \Pi_2 \vwst = (0,2/3,1/3)$; this leads to an increase in the social welfare across groups of $(\vwst)^\top \left(\Delta_1(w) + \Delta_2(w) \right) = (4/9 + 0 + 1/9) + (0 + 4/9 + 1/9) = 10/9$. 

An alternative is to put more weight on shared feature $3$, even though it yields the lowest level of improvement in each group. For example, let us pick $\vw = \frac{1}{\sqrt{3}} \cdot (1,1,1)$. We now get a better expected improvement across groups of $(\vwst)^\top \left(\Delta_1(\vw) + \Delta_2(\vw) \right) = \frac{1}{\sqrt{3}} \left((2/3 + 1/3) + (2/3 + 1/3) \right) = 2/\sqrt{3} > 10/9$.
\end{example}

Using the same techniques as for Lemma~\ref{lem:sw-opt-quasi}, we also characterize the scoring rule that maximizes the social welfare of a single group $g$. We use this as a benchmark to understand how far from optimal $\vw_{SW}$ can be within each group.

\begin{lemma}\label{lem:group-opt-vw}
The scoring rule maximizing the social welfare of group $g$ is: $\vw_g = \frac{ (A_g^{-1} \Pi_g )^\top \vwst}{\|(A_g^{-1} \Pi_g )^\top \vwst \|}$.
\end{lemma}

\section{Equilibrium Analysis}\label{sec:characterizations}

In this section, we study the societal impact of the equilibrium strategies of the principal and the agents {computed in Section \ref{sec:equilibrium-compute}}. We do so by examining feasibility of the objectives of cross-group improvement introduced in Section~\ref{sec:model}. We then study the ability to achieve improvement across groups while maintaining high predictive accuracy. We assume $(A_1^{-1} \Pi_1 + A_2^{-1} \Pi_2)^\top \vwst \neq 0$, as otherwise the objective of Eq.~\eqref{eq:vw-quasi} is always $0$. The proofs for this section can be found in Appendix \ref{app:characterizations}.

\subsection{Do-No-Harm}\label{subsec:do-no-harm}

When a benevolent\footnote{In Sec.~\ref{subsec:joint}, we instead consider a principal who wishes to trade off social welfare and predictive accuracy.} principal deploys an equilibrium rule maximizing the social welfare of the population, one could expect that this rule does not cause any negative externality (i.e., outcome deterioration). However, this is not the case in general, as we observe in the following example.

\begin{example} \label{exa:externality}
Assume that the cost and the projection matrices for the two groups are:\\ 
\[
    A_1 = 
    \begin{bmatrix}
    1 & 2 \\
    2 & 5
    \end{bmatrix}
    , 
    A_2 = 
    \begin{bmatrix}
    \phantom{-}4 & -4 \\
    - 4 & \phantom{-}8
    \end{bmatrix}
    ,\quad
    \Pi_1 = 
    \begin{bmatrix}
    \phantom{-}1/2  & -1/2 \\
    -1/2 & \phantom{-}1/2
    \end{bmatrix},
    \Pi_2 =
    \begin{bmatrix}
    1 & 0 \\
    0 & 0
    \end{bmatrix}.
    \]

Note that $A_1, A_2$ are symmetric and \PD, as their eigenvalues are $\lambda^1 = 3 \pm 2\sqrt{2}$ and $\lambda^2 = 2(3 \pm \sqrt{5})$ respectively. Further, $\Pi_1, \Pi_2$ are orthogonal projections, as $\Pi_g^2 = \Pi_g = \Pi_g^\top$. Finally, assume that ${\vwst}^\top = [0 \; \sqrt{a}]$, for scalar $a > 0$. Then, for the numerator of $\calI_2(\vwst)$ we have that: 
\begin{align*}
&{\vwst}^\top \left(A_1^{-1} \Pi_1 \Pi_2 {A_2^{-1}}^\top + A_2^{-1} \Pi_2 {A_2^{-1}}^\top \right)\vwst \\ 
&= 
{\vwst}^\top \begin{bmatrix}
\phantom{+}2 & \phantom{+}1 \\
-\nicefrac{5}{8} & -\nicefrac{5}{16}
\end{bmatrix}
\vwst = -\frac{5}{16}a < 0 \quad(a > 0).
\end{align*}
\end{example}

\begin{remark} \label{rem:best-case}
Example \ref{exa:externality} highlights the fact that, perhaps surprisingly, even assuming the ``best-case'', where the principal optimizes social welfare, is not sufficient to overcome the tension that stems from cost disparities across groups. Our experiments (App. \ref{app:experiments}) validate this counter-intuitive insight.
\end{remark}

We hence next abstract away from cost disparities (and consider cost functions that differ among groups only by a multiplicative factor\footnote{This covers most of the cost functions considered in prior work~\citep{hardtetal,dongetal,grinder,StrategicPerceptron}, where the cost matrices are diagonal with identical coefficients for all agents.}), as we wish to examine cases when discrepancy between the two groups is only due to disparities in information regarding the principal's assessment rule. We first, however, state the more general necessary and sufficient conditions for guaranteeing no negative externality.
\begin{theorem} \label{thm:do-no-harm}
In equilibrium, there is no negative externality for group $g$ {and any $\vwst$} if and only the matrix {$\left(A_1^{-1} \Pi_1+ A_2^{-1} \Pi_2\right) \Pi_g A_g^{-1} + A_g^{-1} \Pi_g \left(\Pi_1 A_1^{-1} + \Pi_2 A_2^{-1}\right)$} is $\PSD$. 
\end{theorem} 

As we show next (Corollary \ref{prop:do-no-harm1}), assuming \emph{proportional} costs between groups in fact suffices to guarantee no negative externality in any of the groups in equilibrium, \emph{regardless} of information discrepancy between them.

\begin{corollary}\label{prop:do-no-harm1}
There is no negative externality for either group in equilibrium if the cost matrices are proportional to each other; i.e., $A_1 = c \cdot A_2$ for a scalar $c > 0$.
\end{corollary}
Another interesting implication of Theorem \ref{thm:do-no-harm} is that no negative externality is experienced in equilibrium when subspaces $\calS_1, \calS_2$ are orthogonal. Intuitively, this happens because the two groups have {no} informational overlap, and hence optimal social welfare by the principal is achieved by a rule which only has to take into account a single group in each informational subspace.

\begin{corollary}\label{prop:do-no-harm2}
There is no negative externality in equilibrium, if subspaces $\calS_1, \calS_2$ are orthogonal.
\end{corollary}

Theorem \ref{thm:do-no-harm} offers an important takeaway. Namely, that information discrepancy by itself is not sufficient to cause outcome deterioration. It may, however, still result in disparities in improvement, which we discuss next.

\subsection{Equal Improvement Across groups}\label{subsec:total-outcome-improvement}

While highly desirable in itself, the ability to induce improvement simultaneously in all groups does not prevent differences in the extent of such improvements across groups. In this subsection, we hence study a stronger objective: equal improvement across groups.\footnote{Equality of total improvement is a strictly stronger objective than do-no-harm. Indeed, achieving equal total improvement guarantees that there exists no negative externality, since the optimal social welfare is always non-negative.} To isolate the effects of information discrepancy, we assume throughout it that $A_1 = A_2 = I_{d \times d}$. We first introduce a measure that will be helpful in quantifying the difference in improvement:

\begin{definition}\label{def:proxy} Given a scoring rule $\vw\in \mathbb{R}^d$ and projections $\Pi_1,\Pi_2 \in \mathbb{R}^{d\times d}$,
we define the \emph{information overlap proxy} between groups $G_1,G_2$ with respect to $\vw$ to be $r_{1,2}(\vw) := \| \Pi_1 \vw - \Pi_2 \vw \|_2$.
\end{definition}

The following lemma shows that the information overlap proxy with respect to the underlying scoring rule $\vwst$ upper bounds the difference in improvement between groups.

\begin{lemma}\label{lem:overlap-proxy}
Let $\diff_{1,2}(\vw) \triangleq |\calI_1(\vw) - \calI_2(\vw)|$ be the disparity in improvement across groups when the principal's rule is $\vw$. In equilibrium, if $A_1 = A_2 = I_{d \times d}$, then: $\diff_{1,2}(\vw_{\SW}) \leq r_{1,2} (\vwst)$. Further, equality holds if and only if $\Pi_1 \vwst$ and $\Pi_2 \vwst$ are co-linear.
\end{lemma}

\begin{remark} \label{rem:tightness}
In particular, note that the bound is tight when $\Pi_1 = \Pi_2$ (perfect overlap) and $\Pi_1 = I_{d \times d}$, $\Pi_2 = 0$ (maximum informational disparities across groups).
\end{remark}

We next derive necessary and sufficient conditions for equality of improvement in the general case.

\begin{theorem} \label{thm:eq-imp}
In equilibrium, groups have equal improvement for all $\vwst$ if and only if $A_1^{-1}\Pi_1 A_1^{-1} = A_2^{-1}\Pi_2 A_2^{-1}$.
\end{theorem}

Note that Theorem \ref{thm:eq-imp} holds globally (regardless of $\vwst$), identifying the condition for improvement disparity to \emph{vanish}. Lemma \ref{lem:overlap-proxy}, however, provides an instance-specific (as a function of  $\vwst$) upper bound. As we allow for such instance-specific analysis, equal improvement may arise under weaker conditions.

\subsection{Efficient Effort Exertion Across groups}\label{subsec:per-unit-outcome-improvement}

Achieving equality of improvement across groups does not, however, guarantee that effort to improve is exerted at the same level of efficiency across groups. Our notion of per-unit improvement (Definition \ref{def:per-unit-outcome-improve}) aims to capture the concept of optimal effort exertion formally (Remark \ref{rem:per-unit}).
This section studies the ability to ensure efficient effort exertion across groups. We begin by exhibiting the difference between improvement and efficient effort exertion.

\begin{proposition}\label{prop:rel-improvement-arbitrary}
Let $\alpha > 0$ be arbitrarily small. In equilibrium we may see simultaneously:
\squishlist
     \item arbitrarily different improvement across groups: $\calI_1(\vw_{\SW}) < \alpha \cdot \calI_2(\vw_{\SW}).$
    \item optimal per-unit improvement in all groups, i.e., $\ucalI_g(\vw_{\SW}) =  \ucalI_g(\vw_g), \forall g.$
\squishend
\end{proposition}

Next, we derive the necessary and sufficient conditions for optimal per-unit improvement in the general case.

\begin{theorem} \label{thm:suf-and-nec-per-unit} In equilibrium, group $g$ gets optimal per-unit {improvement} if and only if:
\[
\Bigg\langle A_g^{-1}\frac{\Pi_g A_g^{-1}  \vwst}{\left\Vert\Pi_g A_g^{-1} \vwst \right\Vert_2}%\\
- A_g^{-1}\frac{\Pi_g\left( \Pi_1 A_1^{-1} +  \Pi_2 A_2^{-1} \right) \vwst}{\left\Vert\Pi_g\left(\Pi_1 A_1^{-1}  + \Pi_2 A_2^{-1} \right) \vwst \right\Vert_2},\vwst\Bigg\rangle = 0.
\]
\end{theorem} 

Since the condition in Theorem \ref{thm:suf-and-nec-per-unit} can be difficult to interpret, we identify two natural cases where optimal per-unit improvement is guaranteed. The first case occurs when their groups' information on the decision rule does not overlap.

\begin{corollary}\label{cor:optimal-per-unit-1}
In equilibrium, optimal per-unit improvement across groups is guaranteed if $\calS_1, \calS_2$ are orthogonal.
\end{corollary}

The second case is when groups have the same information regarding the decision rule, and their feature modification costs are proportional to one another.

\begin{corollary} \label{cor:optimal-per-unit-2}
In equilibrium, optimal per-unit improvement across groups is guaranteed when the cost matrices are proportional to each other and $\Pi_1=\Pi_2$.
\end{corollary}

Intuitively, both Corollary \ref{cor:optimal-per-unit-1} and Corollary \ref{cor:optimal-per-unit-2} reflect situations where the \emph{direction} of the optimized solution in each of the groups' informational subspace is not affected by the other groups.

\subsection{Improvement With High Predictive Accuracy}\label{subsec:joint}

In this subsection, we replace the assumption of a benevolent principal with one that wishes to trade-off predictive accuracy and social welfare. In other words, we study the ability to induce improvement in all groups while deploying highly-accurate decision rules. The proofs of this subsection and additional intuition can be found in Appendices \ref{app:joint} and \ref{app:commuting}.

A simple way to take into account both accuracy and social welfare objectives is to consider decision rules of the form $\lambda \vwst + (1-\lambda) \vw_{\SW}$ for $\lambda \in [0,1]$. Such rules exhibit a trade-off between picking the accuracy-optimizing (as $\lambda \to 1$) and the welfare-optimizing (as $\lambda \to 0$) rules. We investigate conditions under which a decision rule of this form can induce improvement in all groups. To do so, we begin by introducing a simple condition regarding modification costs.

\begin{definition} \label{def:decomposable} We say that group $g$ has decomposable modification cost, if,
for all $\Delta_x \in S_g$, $A_g \Delta_x \in S_g$, and for all $\Delta_x' \in S_g^\bot$, $A_g \Delta_x' \in S_g^\bot$.
\end{definition}

At a high level, the condition in Definition \ref{def:decomposable} requires the cost of any modification of features to be decomposable into two \emph{independent} components: the cost of modifications within the group's subspace of information, and the cost of modifications outside of it.\footnote{We further note that a similar condition arises in the context of the principal's learning problem (appendix \ref{app:learning}).} We refer the reader to Appendix \ref{app:commuting} for more intuition regarding why such condition may arise naturally, but note that this condition encodes that agents in group $g$ \emph{never} perform manipulations outside of their informational subspace $S_g$, i.e. modifications whose effect they have no understanding of. 

As we prove next, the condition in Definition \ref{def:decomposable} ensures improvement in all groups while maintaining high predictive accuracy.

\begin{theorem} \label{thm:joint}
Assume group $g$ has decomposable modification cost. Then, Do-No-Harm for group $g$ is guaranteed if the principal deploys $\vwst$. Further, if Do-No-Harm is guaranteed for group $g$ under $\vw_{\SW}$, it is also guaranteed for any convex combination of $\vw_{\SW}$ and $\vwst$.
\end{theorem}

In particular, Theorem \ref{thm:joint} shows that even when the principal optimizes for accuracy \emph{alone}, no negative externality is experienced in any of the groups under the condition of Definition \ref{def:decomposable}. We next show a surprising implication of Theorem \ref{thm:joint}. Namely, that unlike the case for the social welfare maximizing solution (as shown in Example \ref{exa:externality}), improvement in all groups may in fact be naturally guaranteed for the accuracy-maximizing solution.

\begin{corollary} \label{cor:full}
Under full information ($\Pi_1=\Pi_2=\mathbbm{I}_{d \times d}$), Do-No-Harm for all groups is guaranteed for $\vwst$.
\end{corollary}

Corollary \ref{cor:full} highlights an interesting perspective for the benefits of transparency in prediction in strategic contexts; even a mostly self-interested principal could in fact benefit from making its rule more transparent. For example, one could consider this in the context of loan approvals, where a bank deploys a proprietary decision rule, aiming primarily for high predictive accuracy, and secondarily for increasing the quality of loan candidates across all groups. Corollary \ref{cor:full} can then be viewed as an incentive for the bank to increase the transparency of such rule.

\section{Experiments}\label{sec:experiments}

Here, we empirically evaluate the impact of information disparities at equilibrium on two real-world datasets that pertain to our setting: the $\textsc{Taiwan-Credit}$ and $\textsc{Adult}$ datasets.\footnote{Available at \url{https://archive.ics.uci.edu/ml/datasets/default+of+credit+card+clients} \& \url{https://archive.ics.uci.edu/ml/datasets/adult}.} Our code is available in the supplementary.

\xhdr{Experimental Setup.}
For $\textsc{Taiwan-Credit}$ $d = 24$ and $\textsc{Adult}$ $d = 14$. In order to guarantee \emph{numerical} (rather than categorical) feature values, we pre-processed the $\textsc{Adult}$ dataset to transform the categorical ones to integers. Specifically, for the features for which there was a \emph{clear} hierarchical ordering (e.g., the ``Education'' feature, where we could order agents in terms of their highest education level reached), we reflected this ordering in the assignment of numerical values to these categories. For the $\textsc{Taiwan-Credit}$ dataset, no pre-processing was needed.

\begin{table}[htbp]
\caption{\vspace{-10pt}Groups for the $\textsc{Taiwan-Credit}$ dataset.}
\begin{center}
\resizebox{0.7\columnwidth}{!}{%
\begin{tabular}{ |c|c|c|c|c| } 
 \hline
   {}       & {\bf Age}       & {\bf Education } & \bf{Marriage} \\ \hline
 {$G_1$}    & $\leq25$ yrs old  & gradschool \& college     & married  \\ \hline
 {$G_2$}    & $>25$ yrs old     & high school    & not-married \\\hline
 {\bf {$\diff(\vw_{\SW})$}}  & $0.34$      & $0.05$      &  $0.23$  \\ \hline
 {\bf {$r_{1,2}(\vwst)$}}  & $0.5$      & $0.52$      &  $0.48$  \\ \hline
\end{tabular}
}
\end{center}
\label{table:dataset1}%\vspace{-10pt}
\end{table}

\begin{table}[htbp]
\caption{\vspace{-10pt}Groups for the $\textsc{Adult}$ dataset.}
\begin{center}
\resizebox{0.7\columnwidth}{!}{%
\begin{tabular}{ |c|c|c|c|c| } 
 \hline
   {}       & {\bf Age}       & {\bf Country } & {\bf Education}\\ \hline
 {$G_1$}    & $\leq 35$ yrs old  & western world     &  degrees $\geq$ high school     \\ \hline
 {$G_2$}    & $>35$ yrs old     & everyone else     &  degrees $<$ high-school    \\\hline
  {\bf {$\diff(\vw_{\SW})$}}  & $0.15$      & $0.66$      &   $0.20$    \\ \hline
 {\bf {$r_{1,2}(\vwst)$}}  & $0.29$      & $0.89$      &   $0.77$     \\ \hline
\end{tabular}
}
\end{center}
\label{table:dataset2} %\vspace{-15pt}
\end{table}

In both cases, we ran ERM in order to identify $\vwst$ and we assumed that costs are $A_1 = A_2 = I_{d \times d}$. In App.~\ref{app:experiments} we present additional experimental results for cost matrices $A_1, A_2$ that differ from one another. After the pre-processing step, we created the groups of the population based on categories that intuitively ``define'' peer-networks. Our judgment for picking these categories is based on folklore ideas about how people choose their network and social circles. For the $\textsc{Taiwan-Credit}$ dataset, we use the following categories: Age, Education, and Marriage, while for the $\textsc{Adult}$ dataset, we use: Age, Country, Education, and the final groups are in Tables~\ref{table:dataset1} and~\ref{table:dataset2}. In order to obtain the projection matrices $\Pi_1, \Pi_2$, we ran $\texttt{SVD}$ on the points inside of $G_1, G_2$. To be more precise, let $X_i \in \bbR^{|G_i| \times d}$ be the matrix having as rows the vectors $\vx^\top, \forall \vx \in G_i$. Then, running $\texttt{SVD}$ on $X$ produces three matrices: $X_g = U \; D \; V_g^\top$, where $V \in \bbR^{d \times r}$ and $r = \texttt{rank}(X_g)$. Let $V_{g,5}$ correspond to the matrix having as columns the eigenvectors corresponding to the $5$ top eigenvalues and zeroed out all other $d - 5$ columns. Then, the projection matrix $\Pi_g$ is defined as $\Pi_g = V_{g,5} V_{g,5}^\top$.\footnote{{Effectively, we focus the feature space on the directions corresponding the top $5$ eigenvalues found in each group's data, as per traditional principal component analysis.}}

\xhdr{Results.} In summary, our experimental results validate our theoretical analysis, and extend our insights to when the projection matrices do not satisfy the exact conditions required by the formal statements of Sec.~\ref{sec:characterizations}. 

First, we see that in both datasets, the principal's rule that optimizes the social welfare does not cause \emph{any} negative externality when $A_1 = A_2 = I_{d \times d}$ (that said, we do observe outcome deteriorations when the cost matrices differ from one another -- see Appendix~\ref{app:experiments}). In fact, we observe \emph{strict} improvement, i.e., $\calI_g(\vw_{\SW}) > 0$ for all groups $g$. 

Second, neither the total nor the per-unit improvements are equal. In terms of total improvements, we in fact see \emph{significant} disparities when the groups are defined based on their Age or their Marital Status in the \textsc{Taiwan-Credit} dataset and based on every categorization in the \textsc{Adult} dataset. These significant disparities for the particular groups we created match our intuition: we expect that people from significantly different age groups or countries to have very different understandings of the scoring rule, in turn leading to possibly very disparate total improvements. 

We note also that in both datasets the difference in the total improvements of the groups is upper bounded by the overlap proxy (i.e., $\diff(\vw_{\SW}) \leq r_{1,2}(\vwst)$), as expected from Lemma~\ref{lem:overlap-proxy}. That said, the gap between the two quantities can be rather large. This is because the magnitude of the overlap is \emph{not} the only factor controlling $\diff(\vw_{\SW})$. Rather, other factors (e.g., the direction of the overlap or how it compares to $\vwst$) also matter significantly.

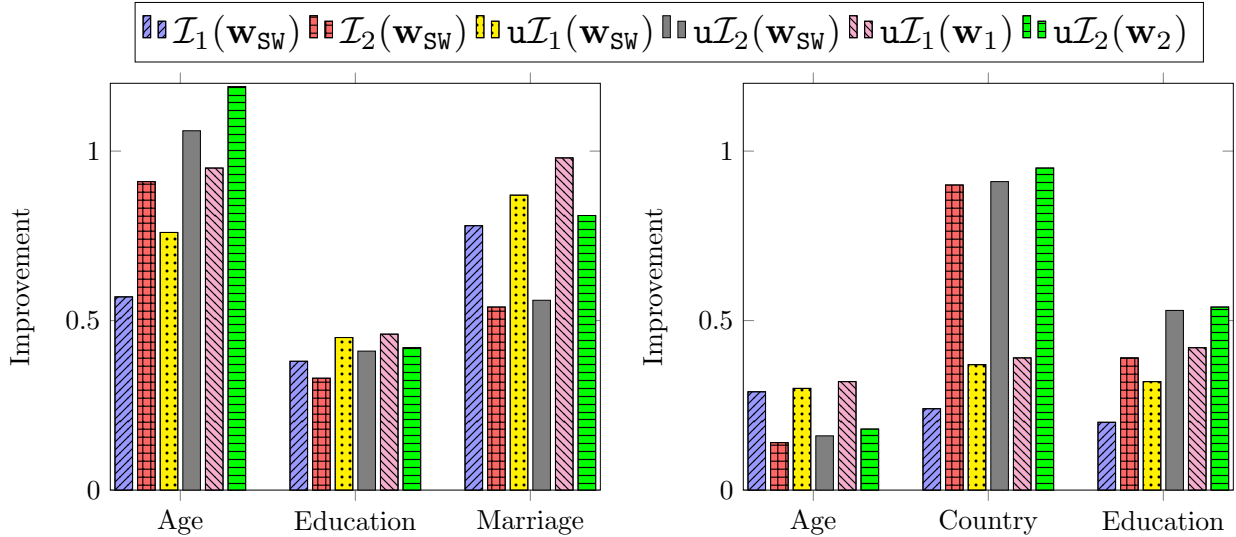
\begin{figure}[htbp]
\centering
\resizebox{\linewidth}{!}{%
\begin{tikzpicture}
        
       %\pgfplotsset{}
        \begin{axis}[
        name=ax1,
        ybar,
        enlarge x limits = 0.2,
        bar width=7pt,
        ylabel={Improvement},
        legend style={font=\small, at={(-0.1,0.-0.2)},anchor=north west,legend columns=2},
        symbolic x coords={Age, Education, Marriage},
        ymax=1.2,
        ymin=0,
        xtick=data,
        ytick={0, 0.5, 1},
        ]

\addlegendentry{$\calI_1(\vw)$}
\addplot[black,fill=blue!40!white,postaction={pattern=north east lines}] plot coordinates {
        (Age,       0.57)
        (Education, 0.38)
        (Marriage,  0.78)
        
};

\addlegendentry{$\calI_2(\vw)$}
\addplot [black, fill=red!60!white, postaction={pattern=grid}]
plot coordinates {
        (Age,       0.91)
        (Education, 0.33)
        (Marriage,  0.54)
};

 \addlegendentry{$\ucalI_1(\vw)$}
 \addplot[black, fill=yellow, postaction={pattern=dots}] plot coordinates {
         (Age,       0.76)
         (Education, 0.45)
         (Marriage,  0.87)
 };

 \addlegendentry{$\ucalI_2(\vw)$}
 \addplot[black, fill=gray] plot coordinates {
         (Age,       1.06)
         (Education, 0.41)
         (Marriage,  0.56)
 };

 \addlegendentry{$\ucalI_1({\vw_1})$}
 \addplot[black, fill=magenta!40!white, postaction={pattern=north west lines}] plot coordinates {
         (Age,       0.95)
         (Education, 0.46)
         (Marriage,  0.98)
 };

 \addlegendentry{$\ucalI_2^(\vw_2)$}
\addplot [black, fill=green, postaction={pattern=horizontal lines}]
 plot coordinates {
         (Age,       1.19)
         (Education, 0.42)
         (Marriage,  0.81)
 };

\legend{}

\end{axis}

    \begin{axis}[
        at={(ax1.south east)},
        xshift=2cm,
        ybar,
        enlarge x limits = 0.2,
        bar width=7pt,
        ylabel={Improvement},
        legend style={font=\small, nodes={scale=1.5, transform shape}, at={(0.94,1.2)},anchor=north east,legend columns=6},         symbolic x coords={Age, Country, Education, Race},
        ymax=1.2,
        ymin=0,
        xtick=data,
        ytick={0, 0.5, 1},
        ]

\addlegendentry{$\calI_1(\vw_{\SW})$}
\addplot[black,fill=blue!40!white,postaction={pattern=north east lines}] plot coordinates {
        (Age,           0.29)
        (Country,       0.24)
        (Education,     0.20)
};

\addlegendentry{$\calI_2(\vw_{\SW})$}
\addplot [black, fill=red!60!white, postaction={pattern=grid}]
plot coordinates {
        (Age,       0.14)
        (Country,   0.9)
        (Education, 0.39)
};

 \addlegendentry{$\ucalI_1(\vw_{\SW})$}
 \addplot[black, fill=yellow, postaction={pattern=dots}] plot coordinates {
         (Age,       0.30)
         (Country,   0.37)
         (Education, 0.32)
 };

 \addlegendentry{$\ucalI_2(\vw_{\SW})$}
 \addplot[black, fill=gray] plot coordinates {
         (Age,       0.16)
         (Country,   0.91)
         (Education, 0.53)
 };

 \addlegendentry{$\ucalI_1(\vw_1)$}
 \addplot[black, fill=magenta!40!white, postaction={pattern=north west lines}] plot coordinates {
         (Age,       0.32)
         (Country,   0.39)
         (Education, 0.42)
 };

\addlegendentry{$\ucalI_2(\vw_2)$}
\addplot [black, fill=green, postaction={pattern=horizontal lines}]
 plot coordinates {
         (Age,       0.18)
         (Country,   0.95)
         (Education, 0.54)
 };

\end{axis}
\end{tikzpicture}
}
\caption{Left, Right: evaluation on the $\textsc{Taiwan-Credit}$ and $\textsc{Adult}$ dataset respectively. Tables~\ref{table:dataset1} and~\ref{table:dataset2} contain the characteristics of groups $G_1, G_2$. Recall that $\calI_g(\vw_{\SW}), \ucalI_g(\vw_{\SW})$, denote the total and per-unit improvement for group $g$ in equilibrium respectively, while $\ucalI_g(\vw_g)$ denotes the optimal per-unit improvement for group $g$.
\vspace{-5pt}}\label{fig:experiments}
\end{figure}

Note that \emph{optimal per-unit} improvement can be very different across groups; an extreme example is the Country categorization in \textsc{Adult}. It is surprising, however, that for the Education categorization in \textsc{Taiwan-Credit} the optimal per-unit improvement is almost identical. Another interesting observation is that the \emph{per-unit} improvement is close to (or almost the same as!) the optimal per-unit improvement for all groups in \textsc{Adult}. We suspect that this is due to having very different projection matrices $\Pi_1, \Pi_2$. 

\section{Conclusion}\label{sec:conclusion}

In this work, we have taken a first step towards understanding the implications of inaccessible decision rules for different sub-populations in strategic learning. Our results establish a close connection between the informational overlap across sub-populations, the extent to which it is possible to ensure improvement in each, and whether such improvement can be induced while maintaining high predictive accuracy. 

We discuss next limitations of our work and avenues for future research. First, our model incorporates a linearity assumption regarding the form of the decision rule; we believe this linear assumption is a natural and simple choice for a first model that studies the phenomenon of information discrepancy, but an interesting future direction would be to understand which of our insights translate to a non-linear model, and what new insights arise. Second, we assume agents best-respond perfectly to the principal's choices. While one can weaken the best-response assumption, this may affect the sharpness of our results. We note that some form of assumption regarding how agents respond to the model is natural and reflects many real-life situations. Finally, while we provide guarantees for improvement with high predictive accuracy in the form of safeguarding against outcome deterioration, it would be interesting to further study the trade-offs between accuracy and improvement.

\section{Acknowledgments}
Yahav Bechavod was supported in part by Israel Science Foundation (ISF) grants 1044/16 and 2861/20, the United States Air Force and DARPA under contract FA8750-19-2-0222, and the Apple Scholars in AI/ML PhD Fellowship. Chara Podimata was supported by a Microsoft Dissertation Grant and a Siebel scholarship. Juba Ziani was funded in part by the Warren Center for Network and Data Sciences at the University of Pennsylvania, and NSF grant AF-1763307. Any opinions, findings and conclusions or recommendations expressed in this material are those of the author(s) and do not necessarily reflect the views of the United States Air Force and DARPA. 

\newpage
\bibliographystyle{plainnat}
\bibliography{refs.bib}

\newpage
\appendix
\section{The Principal's Learning Problem}\label{app:learning}

Up until know, we have assumed that the principal has full information on the parameters of the problem. In particular, the principal perfectly knows the underlying linear model $\vwst$, the cost matrices $A_1$ and $A_2$, and the projection matrices $\Pi_1$ and $\Pi_2$. In this section, we study how our principal can learn $\vw_{\SW}$ from samples of agents' \emph{modified} features.

To do so, we present two simple building blocks: one that uses a batch of observations to help us estimate $\vwst$, and one that, aims to estimate $\Delta_g(\vw) = A_g^{-1} \Pi_g \vw$ for a given $\vw$. We make the following commutativity assumption (please see Appendix \ref{app:commuting} for more regarding this assumption):
\begin{assumption}\label{as:commute}
For all $g \in \{1,2\}$, $\Pi_g A_g^{-1} = A_g^{-1} \Pi_g$. 
\end{assumption}
We remark that this assumption holds in several cases of interest. For example, this holds when $A_g = \sigma_g \mathbb{I}_{d \times d}$ for some $\sigma_g \geq 0$, i.e. when the cost of an agent for modifying features is the same across all features and independent across features. This also happens when $\Pi_g$ and $A_g^{-1}$ are both diagonal, in which case they are simultaneously diagonalizable hence commute (for example, when $\Pi_g$ is the projection to a subset of the features, and when manipulating one feature does not affect another feature for free).

Under Assumption~\ref{as:commute}, Equation~\eqref{eq:principal-vw2} can be rewritten as
\[
\vw_{\SW} = \frac{ A_1^{-1} \Pi_1 \vwst + A_2^{-1} \Pi_2 \vwst}{\Vert A_1^{-1} \Pi_1 \vwst + A_2^{-1} \Pi_2 \vwst \Vert}
= \frac{\Delta_1(\vwst) + \Delta_2(\vwst)}{\Vert \Delta_1(\vwst) + \Delta_2(\vwst) \Vert},
\]
Accurate estimation of both $\Pi_g \vwst$ and $\Delta_g(\vw)$ for any given $\vw$ is sufficient for accurate estimation of $\vw_{\SW}$. The principal can then take a classical explore-then-exploit approach, in which she first sets aside a batch of agents in group $g$ to estimate the parameters of the problem to her desired accuracy, then use the parameters she learned to incentivize optimal outcome improvement during the rest of the time horizon. 

\xhdr{Estimating $\mathbf{\Pi_g \vwst}$} To estimate $\Pi_g \vwst$, we use Algorithm~\ref{algo:learning_w}. The algorithm has access to $n$ agents from group $g$. It consists of first posting an initial model of $\vw = 0$ w.l.o.g.\footnote{
The choice of $\vw = 0$ in Algorithm~\ref{algo:learning_w} is not crucial. In fact, picking any given $\vw$ induces the same distribution of feature vectors as $\calS_g$, with its expectation shifted by a constant amount of $\Delta_g(\vw)$, after the first $N_g$ unmodified observations. In turn, given $N_g + n$ samples, the distribution of the last $n$ feature vectors used for estimation remains full-rank in subspace $\calS_g$ and still has covariance matrix $\Sigma_g$. Therefore, the high-probability bound of Claim~\ref{clm:vstar_cctr} remains the same. In most cases, $N_g$ is of the order of the dimension $d$ of the problem, hence we have that $n >> N_g$, and the cost of waiting for agent to learn $\vw$ is minimal.}
, observing the agents' true, unmodified features and true labels (according to $\vwst$), and using these observations to compute and output an estimate $\bar{\vw}$ of $\Pi_g \vwst$:

\begin{algorithm}[htbp]
\caption{Estimating $\Pi_g \vwst$}\label{algo:learning_w}
\begin{algorithmic}
\STATE Post $\vw = 0$;
\FOR{$i=1$ {\bfseries to} $n$}
\STATE Principal observes agent $i$'s true feature vector $\vx_i$, and his true label $y_i$;
\ENDFOR
\STATE Output $\bar{\vw} \triangleq \arg \min_{\vw} \sum_{i=1}^n \left(\vx_i^\top \Pi_g \vw - y_i\right)^2$;
\end{algorithmic}
\end{algorithm}

For simplicity of exposition, we consider the case in which the noise in the label follows a Gaussian distribution, as per the below assumptions. We note however that our results classically extend to the sub-Gaussian case by classical recovery guarantees of linear least-square regression.
\begin{assumption}\label{as:gaussian_noise}
For every agent $i$, $y_i - \vx_i^\top \vwst \sim \mathcal{N}(0,\sigma^2)$ where $0 \leq \sigma^2 < \infty$.
\end{assumption}

\begin{claim}\label{clm:vstar_cctr}
Under Assumption~\ref{as:gaussian_noise}, with probability at least $1-\delta$, the output $\bar{\vw}$ of Algorithm~\ref{algo:learning_w} satisfies
\begin{align*}
    \left \Vert \bar{\vw} - \Pi_g \vwst \right \Vert_2 
    = O\left(\frac{\sigma^2 d \log(1/\delta)}{\lambda_g n}\right),
\end{align*}
where $\lambda_g$ denotes the smallest non-zero eigenvalue of $\Sigma_g$, the covariance matrix of distribution $\calD_g$.
\end{claim}

\begin{proof}
Without loss of generality, we restrict attention to the subspace $\calS_g$ induced by the support of the distribution of features $\calD_g$. Let $\Sigma_g$ be the covariance matrix of distribution $\calD_g$; by definition of $\calD_g$, $\Sigma_g$ is full-rank with smallest eigenvalue $\lambda_g$ in $\calS_g$. By the classical recovery results on least-square regression, since $\E \left[y_i | \vx_i\right] = \vx_i^\top (\Pi_g \vwst)$ by assumption, we obtain that 
\begin{align*}
    \Vert \bar{\vw} - \Pi_g \vwst\Vert_2 = O\left(\frac{\sigma^2 d \log(1/\delta)}{\lambda_g n}\right).
\end{align*}
This concludes the proof.
\end{proof}

\xhdr{Estimating $\Delta_g(\vw)$.} Algorithm~\ref{algo:learning_delta} has access to $2n + N_g$ agents from group $g$, takes as an input a vector $\vw$, and outputs an estimate of $\Delta_g(\vw)$.

\begin{algorithm}[htbp]
\caption{Estimating $\Delta_g(\vw)$}\label{algo:learning_delta}
\begin{algorithmic}
\STATE Post $\vw_1 = 0$\;
\FOR{$i=1$ {\bfseries to} $n$}
\STATE Principal observes agent $i$'s true feature vector $\vx_i$, and his true label $y_i$\;
\ENDFOR
\STATE Post $\vw_2 = \vw$\;
\FOR{$i = n + 1$ {\bfseries to} $n + N_g$}
\STATE agent $i$ plays true feature vector $\vx_i$\;
\ENDFOR
\FOR{$i = n + N_g + 1$ {\bfseries to} $2 n + N_g$}
\STATE principal observes agent $i$'s modified feature vector $\hat{\vx}_i = \vx_i + \Delta_g(\vw)$\;
\ENDFOR
\STATE Output $\overline{\Delta_g} \triangleq \frac{1}{n} \left(\sum_{i= n + N_g +1}^{2n + N_g}  \hat{\vx}_i - \sum_{i=1}^{n} \hat{\vx}_i \right)$
\;
\end{algorithmic}
\end{algorithm}

\begin{claim}
Let us assume that for all $i$, $\Vert x_i \Vert_{\infty} \leq 1$. Then, with probability at least $1-\delta$, the output $\overline{\Delta_g}$ of Algorithm~\ref{algo:learning_delta} satisfies
\[
\left \Vert \overline{\Delta_g} - \Delta_g(\vw) \right \Vert_2 
\leq \sqrt{\frac{d \log(d/2\delta)}{n}}.
\]
\end{claim}

\begin{proof}
First, we note that 
\begin{align*}
\overline{\Delta_g} 
\triangleq \frac{1}{n} \left(\sum_{i= n+ N_g + 1}^{2n + N_g} \hat{\vx}_i - \sum_{i=1}^{n} \hat{\vx}_i \right)
& = \frac{1}{n}\left(\sum_{i= n+ N_g + 1}^{2n + N_g} \vx_i + n \Delta_g(\vw) - \sum_{i=1}^{n} \vx_i + \right)
\\&= \Delta_g(\vw) + \frac{1}{n} \left(\sum_{i= n+ N_g + 1}^{2n + N_g} \vx_i - \sum_{i=1}^{n} \vx_i + \right).
\end{align*}
In turn, we have that 
\[
\left \Vert \overline{\Delta_g} - \Delta_g(\vw) \right \Vert_2
= \frac{1}{n} \left \Vert \sum_{i= n+ N_g + 1}^{2n + N_g} \vx_i - \sum_{i=1}^{n} \vx_i \right \Vert_2.
\]
We have that 
\[
\sum_{i= n+ N_g + 1}^{2n + N_g} \vx_i - \sum_{i=1}^{n} \vx_i = \sum_{i=1}^n Z_i
\]
where $Z_i = \vx_{i+n+N_g} - \vx_{i}$. In turn, $Z_i$ is a random vector with mean $\E[Z_i] = 0$ and covariance matrix $2 \Sigma_g$, noting that $\vx_i$ and $\vx_{i+n+N_g}$ are drawn independently. Further, $\left\vert Z_i(k) \right\vert \leq 2$. By Hoeffding's inequality, we have that with probability at least $1 - \frac{\delta}{d}$, for a given $k \in [d]$,
\[
\left \vert \sum_{i=1}^n Z_i(k) \right \vert \leq \sqrt{2 n \log(d/2\delta)}.
\]
By union bound, we have that with probability at least $1-\delta$, this holds simultaneously for all $k \in [d]$, with directly yields
\[
\left \Vert \sum_{i=1}^n Z_i \right \Vert = \sqrt{\sum_{k=1}^d \left(\sum_{i=1}^n Z_i(k)\right)^2 }
\leq \sqrt{2 d n \log(d/2\delta)}.
\]
This immediately leads to the result.
\end{proof}
\section{Supplementary Material for Section~\ref{sec:equilibrium-compute}}\label{app:equilibrium-compute}

\subsection{Omitted Proofs from Subsection \ref{subsec:surr-vw-g}}

\begin{proof}[Proof of Lemma~\ref{lem:inferred-rule}]
We first identify the rules $\tvw$ that are the solutions of the error minimization part of Eq.~\eqref{eq:erm}:
\begin{align*}
    \tvw    %&= \arg\min_{\vw'} \left \| \vX_g^\top \vw' -  \mathbf{\hat{Y}}_g\right\|_2^2 
    &= \arg \min_{\vw'} \sum_{i \in [N_g]} \left( \vx_{g,i}^\top \vw' - \hy_{g,i} \right)^2\\ 
    &= \arg \min_{\vw'} \underbrace{\sum_{i \in [N_g]} \left( \left(\Pi_g \vx_{g,i} \right)^\top \vw' - \hy_{g,i} \right)^2}_{f(\vw')} \numberthis{\label{eq:bef-opt}}
\end{align*}
where the second equation is due to the fact that since $\forall \vx_{g,i} \sim \calD_g$, it holds that $\Pi_g \vx_{g,i} = \vx_{g,i}$. To solve the minimization problem of Eq.~\eqref{eq:bef-opt}, we take the first order conditions, so at the optimal $\tvw$:
\begin{equation}\label{eq:first-order}
    \nabla f\left( \tvw \right) = 0 \Leftrightarrow 2 \sum_{i \in [N_g]}\left( \Pi_g \vx_{g,i} \right) \left( \left(\Pi_g \vx_{g,i} \right)^\top \tvw - \hy_{g,i} \right) = 0
\end{equation}
Now, $\tvw = \Pi_g \vw$ is one of the solutions of Eq.~\eqref{eq:first-order}, since $\hy_{g,i} = (\Pi_g \vx_{g,i})^\top \vw = (\Pi_g \vx_{g,i} )^\top \Pi_g \vw$. Next, we argue that due to the norm-minimization rule we use for tie-breaking, it is also the \emph{unique} solution. To do so, let $\tvw$ be a norm-minimizing solution of Eq.~\eqref{eq:first-order}, and write $\tvw = \Pi_g \vw + \vx'$, where $\vx'$ is an arbitrary vector; note that this is without loss of generality. We can write $\tvw = \Pi_g \vw + \Pi_g \vx' + \Pi_g^\bot \vx'$ (where $\Pi_g^\bot \vx'$ is the projection of $\vx'$ in the orthogonal subspace of $\calS_g$). Now, note that $\Pi_g \tvw = \Pi_g \vw + \Pi_g \vx' + \Pi_g \Pi_g^\bot \vx' = \Pi_g \vw + \Pi_g \vx'$ is also a solution to Eq.~\eqref{eq:first-order}, as
\begin{align*}
   &\sum_{i \in [N_g]}\left( \Pi_g \vx_{g,i} \right) \left( \left(\Pi_g \vx_{g,i} \right)^\top \Pi_g \tvw - \hy_{g,i} \right)=\\ 
   &= \sum_{i \in [N_g]}\left( \Pi_g \vx_{g,i} \right) \left( \vx_{g,i}^\top  \Pi_g^\top \Pi_g \tvw - \hy_{g,i} \right) 
   \\&= \sum_{i \in [N_g]}\left( \Pi_g \vx_{g,i} \right) \left( \vx_{g,i}^\top  \Pi_g \tvw - \hy_{g,i} \right), 
\end{align*}
where the last step is due to the fact that $\Pi_g$ represents an orthogonal projection, hence $\Pi_g^\top = \Pi_g$ and $\Pi_g \Pi_g = \Pi_g$. Further, if $\Pi_g^\bot \vx' \neq 0$, we have that
\begin{align*}
\Vert \tvw \Vert_2 
&= \Vert \Pi_g \vw + \Pi_g \vx' \Vert_2  + \Vert \Pi_g^\bot \vx' \Vert_2\\
&> \Vert \Pi_g \vw + \Pi_g \vx' \Vert_2\\
&= \Vert \Pi_g \tvw \Vert_2
\end{align*}
by orthogonality of $\Pi_g \left(\vw + \vx'\right)$ and $\Pi_g^\bot \vx'$. This contradicts the fact that $\tvw$ is a norm-minimizing solution of Eq.~\eqref{eq:first-order}. Therefore, we have $\tvw = \Pi_g \vw + \Pi_g \vx'$. 

Using this together with $\hy_{g,i} = (\Pi_g \vx_{g,i} )^\top \Pi_g \vw$, the left-hand side of Eq.~\eqref{eq:first-order} becomes: 
\begin{align*}
\sum_{i \in [N_g]} \left( \Pi_g \vx_{g,i} \right) 
\left( \left( \Pi_g \vx_{g,i}\right)^\top \Pi_g \vw + \left( \Pi_g \vx_{g,i} \right)^\top \Pi_g \vx' -\left( \Pi_g \vx_{g,i}\right)^\top \Pi_g \vw \right)
= \sum_{i \in [N_g]} \left( \Pi_g \vx_{g,i} \right) \left(\Pi_g \vx_{g,i} \right)^\top \vx' \numberthis{\label{eq:bef-claim}}
\end{align*}
where the last equality comes from the fact that $\Pi_g \Pi_g^\bot = 0_{d \times d}$. We next prove a technical lemma. 
\begin{lemma}\label{lem:intermediate}
Let $Z \triangleq \sum_{i \in [N_g]} \left( \Pi_g \vx_{g,i} \right) \left( \Pi_g \vx_{g,i} \right)^\top$ be full-rank in subspace $\calS_g$. Then, for any vector $\vx' \in \bbR^d$ it holds that $Z \left( \Pi_g \vx' \right) = 0$ \emph{if and only if} $\Pi_g \vx' = 0$. 
\end{lemma}

\begin{proof}[Proof of Lemma~\ref{lem:intermediate}]
If $\Pi_g \vx' = 0$ then, it holds that $Z (\Pi_g \vx') = 0$. So, assume that $Z \left( \Pi_g \vx' \right) = 0$.

Let $r = \texttt{rank}(\calS_g)$ (hence, $r = \texttt{rank}(Z)$). Let us denote by $v_1, \dots, v_r$ the eigenvectors of $Z$ corresponding to eigenvalues $\lambda_1, \dots, \lambda_r$ for which $\lambda_i > 0, \forall i \in [r]$; note that $(v_1,\ldots,v_r)$ span $\calS_g$. For the rest of the eigenvalues (i.e., $i \in [r+1, d], \lambda_i = 0)$ the remaining eigenvectors are denoted as $v_{r+1}, \dots, v_d$. Without loss of generality, we take $v_i$ to have norm $1$ for all $i$; since $Z$ is a symmetric matrix, $(v_1,\ldots,v_r)$ is an orthonormal basis for $\calS_G$ and $(v_1,\ldots,v_d)$ is an orthonormal basis for $\mathbb{R}^d$.

Let $V$ denote the $d \times d$ matrix $[v_1^\top \; v_2^\top \; \cdots v_d^\top ]$ that is the change of basis that transforms the standard basis into $(v_1,\ldots,v_d)$. By orthonormality of $(v_1,\ldots,v_d)$, $V$ is unitary (i.e., $V^\top V = \mathbbm{I}$). In turn, 
\begin{equation}\label{eq:before-sub}
    Z = \left( \Pi_g \vx' \right) = V^\top V \, Z \, V^\top V \, \Pi_g \vx'
\end{equation}
Let us define matrices $P_1 = V \, Z \, V^\top$ and $P_2 =  V \, \Pi_g \vx'$. $P_1$ is a diagonal matrix having $\lambda_1, \dots \lambda_d$ on the diagonal (and hence, it only has positive values until row $r$ and $0$'s for rows in $\{r+1, d \}$). Also,
\begin{align*}
P_2 = V \, \Pi_g \, \vx' = [a_1 \; \cdots \; a_r \; 0 \; \cdots \; 0]^\top, \; \text{where} \; a_i = v_i^\top \left(\Pi_g \vx'\right).
\end{align*}
Substituting the values of $P_1, P_2$ in Eq.~\eqref{eq:before-sub} we have that: 
\begin{equation*}
    Z \left( \Pi_g \vx' \right) = V^\top \; \left[\lambda_1 a_1 \; \cdots \; \lambda_r a_r \; 0 \; \cdots \; 0 \right]^\top
\end{equation*}
But $Z \left( \Pi_g \vx' \right) = 0$ if and only if $\lambda_i a_i = 0, \forall i \in [r]$, because $V$ is invertible. Since $\lambda_i > 0$ for $i \in [r]$, it must be that $a_i = 0$. Since then, $V \Pi_g \vx' = 0$ and $V$ is invertible, this implies that $\Pi_g x' = 0$. 
\end{proof}
Defining $Z$ as $Z = \sum_{i \in [N_g]} \left( \Pi_g \vx_{g,i}\right) \left(\Pi_g \vx_{g,i} \right)^\top$, \footnote{Given enough samples from the peer dataset (i.e., a large enough $N_g$), one can guarantee that $Z$ is full rank.}  then from Lemma~\ref{lem:intermediate}, Eq.~\eqref{eq:bef-claim} is equal to $0$ \emph{if and only if} $\Pi_g \vx' = 0$. This directly yields $\tvw = \Pi_g \vw + \Pi_g \vx' = \Pi_g \vw$.
\end{proof}

\subsection{Omitted Proofs from Subsection \ref{subsec:closedform}}

\begin{proof}[Proof of Lemma~\ref{lem:br-computation}]
The function in Eq.~\eqref{eq:utility-simplified} is \emph{concave}. At the optimum $\vx'$ from the first order conditions we have that $\nabla u \left( \vx, \vx' ; g \right) = \Pi_g \vw - A_g (\vx' - \vx) = 0$. Solving the latter in terms of $\vx'$ and using the fact that matrix $A_g$ is positive definite (hence also \emph{invertible}) gives us the result.
\end{proof}

\subsection{Omitted Proofs from Subsection \ref{subsec:principal-prob}}

\begin{lemma}\label{lem:linear-obj-quadratic-constr}
Let $Q \in \bbR^{d \times d}$ a symmetric $\PD$ matrix and $c$ a vector in $\bbR^d$. Then, the following optimization problem: 
\begin{align*}
    \max_{x \in \bbR^d} \; &c^\top x \\
    \text{s.t.,} \; & x^\top Q x \leq b
\end{align*}
has unique solution: \[x = \frac{b \, Q^ {-1} c}{\sqrt{c^\top Q^{-1} c}}\]
\end{lemma}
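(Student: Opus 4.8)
The plan is to recognize this as the maximization of a linear functional over a solid ellipsoid and to solve it via the generalized Cauchy--Schwarz inequality induced by $Q$. I would treat the degenerate case $c = 0$ separately (there no unique maximizer exists, so the statement implicitly requires $c \neq 0$, which holds in every application by the standing assumption that $(A_1^{-1}\Pi_1 + A_2^{-1}\Pi_2)^\top \vwst \neq 0$).

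First, since $Q \succ 0$, it admits a unique symmetric positive-definite square root $Q^{1/2}$, and I would change variables to $y = Q^{1/2} x$. The constraint becomes $\|y\|_2^2 \le b$ and the objective becomes $c^\top Q^{-1/2} y = (Q^{-1/2} c)^\top y$. The feasible set $\{y : \|y\|_2 \le \sqrt{b}\}$ is a compact ball and the objective is continuous, so a maximizer exists; moreover it must lie on the boundary sphere whenever $Q^{-1/2} c \neq 0$ (equivalently $c \neq 0$), since moving outward along the direction $Q^{-1/2}c$ strictly increases the objective.

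Second, I would apply Cauchy--Schwarz: $(Q^{-1/2}c)^\top y \le \|Q^{-1/2}c\|_2\,\|y\|_2 \le \sqrt{c^\top Q^{-1} c}\,\sqrt{b}$, with equality in the first inequality if and only if $y$ is a nonnegative multiple of $Q^{-1/2} c$. This pins down the maximizer uniquely: $y = t\, Q^{-1/2} c$ with $t \ge 0$, and tightness of the constraint $\|y\|_2 = \sqrt{b}$ fixes $t = \sqrt{b}/\sqrt{c^\top Q^{-1} c}$. Undoing the substitution gives $x = Q^{-1/2} y = t\,Q^{-1} c = \sqrt{b}\,Q^{-1}c/\sqrt{c^\top Q^{-1} c}$, and uniqueness follows from the Cauchy--Schwarz equality condition together with $c \neq 0$.

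The one point requiring care---and the only real subtlety---is the scaling in $b$: the displayed formula $x = b\,Q^{-1}c/\sqrt{c^\top Q^{-1} c}$ yields $x^\top Q x = b^2$, so it meets the constraint with equality only when $b = 1$, which is precisely the value used in every invocation of this lemma (e.g.\ in Lemma~\ref{lem:sw-opt-quasi}, where $Q$ is the identity and $b=1$). For general $b$ the correct tight maximizer carries $\sqrt{b}$ as derived above, so I would either state the lemma for $b=1$ or replace $b$ by $\sqrt{b}$ in the conclusion. An equivalent route, avoiding $Q^{1/2}$, is a KKT/Lagrangian argument: stationarity of $c^\top x - \lambda(x^\top Q x - b)$ gives $x = \tfrac{1}{2\lambda}Q^{-1}c$, and substituting into the active constraint solves for $\lambda > 0$; convexity of the feasible region and linearity of the objective then certify that this KKT point is the unique global maximizer.
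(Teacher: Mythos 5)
Your proof is correct, and it takes a genuinely different route from the paper's. The paper argues via the Lagrangian/KKT conditions: it writes $L(x,\lambda) = -c^\top x + \frac{\lambda}{2}(x^\top Q x - b)$, concludes the constraint must be active, solves stationarity for $x = \frac{1}{\lambda}Q^{-1}c$, and substitutes into $x^\top Q x = b$ to determine $\lambda$ (this is exactly the alternative route you sketch in your last sentence). Your main argument instead changes variables $y = Q^{1/2}x$ and applies Cauchy--Schwarz on the ball $\|y\|_2 \le \sqrt{b}$. What your approach buys is a self-contained uniqueness argument via the equality case of Cauchy--Schwarz, with no appeal to constraint qualifications or to why a KKT point of a maximization problem over a convex set is a global maximum (the paper's closing justification, ``the objective function is convex and the feasible set is concave,'' is garbled on this point). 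It also forces the degenerate case $c = 0$ into the open: uniqueness genuinely fails there, the paper never mentions it, and as you observe it is excluded in all applications by the standing assumption $\left(A_1^{-1}\Pi_1 + A_2^{-1}\Pi_2\right)^\top \vwst \neq 0_{d\times 1}$.

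More importantly, you caught a real error that is present in the paper's own proof. Solving $\frac{1}{\lambda^2}c^\top Q^{-1}c = b$ gives $\lambda = \sqrt{c^\top Q^{-1}c/b}$, whereas the paper writes $\lambda = \frac{1}{b}\sqrt{c^\top Q^{-1}c}$; consequently the stated maximizer $x = b\,Q^{-1}c/\sqrt{c^\top Q^{-1}c}$ satisfies $x^\top Q x = b^2$, so for $b > 1$ it is infeasible and for $b < 1$ it is strictly interior and hence suboptimal. Your diagnosis and fix are exactly right: the conclusion should carry $\sqrt{b}$ rather than $b$ (or the lemma should be stated for $b = 1$), and the slip is harmless downstream because Lemma~\ref{lem:sw-opt-quasi} invokes the result only with $Q$ the identity and $b = 1$, where the two formulas coincide.
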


\begin{proof}
We first compute the Lagrangian:
\begin{equation}
    L(x, \lambda) = - c^\top x + \frac{\lambda}{2} \left( x^\top Q x - b \right) 
\end{equation}
We can then find the KKT conditions:
\begin{align}
    - c + \lambda Q x &= 0 \label{eq:KKT1}\\
    \lambda &\geq 0 \label{eq:KKT2}\\ 
    \lambda \left( x^\top Q x - b\right) &= 0 \label{eq:KKT3}\\
    x^\top Q x &\leq b \label{eq:KKT4}
\end{align}
At maximum it must be the case that $\lambda > 0$ (from Eq.~\eqref{eq:KKT2}) and hence, combining Eqs.~\eqref{eq:KKT4} and~\eqref{eq:KKT3} we get $x^\top Q x = b$. Due to the fact that $\lambda > 0$, then from Eq.~\eqref{eq:KKT1}, solving in terms of $x$ and using the fact that $Q$ is symmetric positive definite we get: 
\begin{equation}\label{eq:Delta-bef-sub}
    x = \frac{1}{\lambda} Q^{-1} c
\end{equation}
Substituting the above in equation $x^\top Q x = b$ we obtain: 
\begin{equation}
    x^\top Q x = \frac{1}{\lambda^2} c^\top Q^{-1} c = b
\end{equation}
Solving this in terms of $\lambda$ gives $\lambda = \frac{1}{b}\sqrt{c^\top Q^{-1} c}$. Substituting $\lambda$ in Eq.~\eqref{eq:Delta-bef-sub} we get the result. The proof is completed by the fact that the objective function is convex and the feasible set is concave; hence the global optimum is found at a KKT point.
\end{proof}

\begin{proof}[Proof of Lemma~\ref{lem:sw-opt-quasi}]
We first note a useful lemma (which we formally state and prove in Lemma~\ref{lem:linear-obj-quadratic-constr}), namely that if $Q \in \bbR^{d \times d}$ is a symmetric $\PD$ matrix and $c$ a vector in $\bbR^d$ then the solution of the optimization problem $\max_{x} \; c^\top x$ such that $x^\top Q x \leq b$ has unique solution $x = \frac{b \, Q^ {-1} c}{\sqrt{c^\top Q^{-1} c}}$.

Using the closed-form of the agents' best-response from Lemma~\ref{lem:br-computation} in Eq.~\eqref{eq:max-vw} we get that:
\begin{align*}
\vw_{\SW} &= \arg\max_{\vw': \Vert\vw'\Vert_2 \leq 1} \displaystyle\mathop{\E}_{\vx \sim\mathcal{D}_1}\left[ \left \langle {\vwst}, \hvx(\vx;1) \right \rangle \right] + \displaystyle\mathop{\E}_{\vx \sim\mathcal{D}_2}   
     \left[ \left \langle {\vwst}, \hvx(\vw; 2) \right \rangle \right] \\
     &= \arg \max_{\vw': \Vert\vw'\Vert_2 \leq 1}\displaystyle\mathop{\E}_{\vx \sim\mathcal{D}_1}   
     \left[ \left \langle {\vwst}, \vx + \Delta_1(\vw)\right \rangle \right] + \displaystyle\mathop{\E}_{\vx \sim\mathcal{D}_2}   
     \left[ \left \langle {\vwst}, \vx + \Delta_2(\vw)\right \rangle \right]\\
  &= \arg\max_{\vw': \Vert\vw'\Vert_2 \leq 1} \left \langle   \left(A_1^{-1}\Pi_1 + A_2^{-1}\Pi_2 \right) \vw', \vwst \right \rangle \numberthis{\label{eq:vw-quasi}}
\end{align*}
We rewrite the objective function to be optimized above as: $[{\vwst}^\top (A_1^{-1} \Pi_1 + A_2^{-1} \Pi_2 ) ] \vw  = [(A_1^{-1} \Pi_1 + A_2^{-1} \Pi_2 )^\top {\vwst}  ]^\top \vw $ and the constraint for $\vw$ remains: $\vw^\top \vw \leq 1$. This problem is an instance of the problem solved in Lemma~\ref{lem:linear-obj-quadratic-constr} for $c =(A_1^{-1} \Pi_1 + A_2^{-1} \Pi_2 )^\top {\vwst}$, $b = 1$ and $Q$ the identity matrix. Substituting $c, Q, b$ in the solution of Lemma~\ref{lem:linear-obj-quadratic-constr} gives the result.
\end{proof}

\section{Generalizing to Multiple Groups} \label{app:gen-multiple-groups}

Let $\calG$ denote the set of all groups, i.e., $\calG = \{1, 2, \dots, m\}$. As is customary in the literature, we use $\calG_{-j}$ to denote all the groups apart from group $j$, i.e., $\calG_{-j} = \{1, 2, \dots, j-1, j+1, \dots, m\}$. In order to explain how the theorem and proposition statements change when $m > 2$, we first outline how the principal's equilibrium rule changes as a result of the presence of $m > 2$ groups. Due to the fact that the estimated rule for each group $g \in \calG$ is: $\vw_{\proxy}(g) = \Pi_g \vw$, then from extending Lemma~\ref{lem:sw-opt-quasi} we have that the principal's equilibrium rule becomes: 
\begin{equation}\label{eq:vwsw-gen}
    \vw_{\SW} = \frac{\left(\Pi_1 A_1^{-1} + \dots + \Pi_m A_m^{-1} \right)\vwst}{\|\left(\Pi_1 A_1^{-1} + \dots + \Pi_m A_m^{-1} \right)\vwst\|}
\end{equation}

We first analyze the do-no-harm objective for the case that $m > 2$ groups are present in the population. The analogue of Theorem~\ref{thm:do-no-harm} for $m > 2$ groups follows.

\begin{theorem}
In equilibrium, there is no negative externality for group $g$ and any $\vwst$ if and only if for all $g\in \calG$, the matrix $\left( \sum_{i \in \calG} A_i^{-1} \Pi_i \right) \Pi_g A_g^{-1} + A_g^{-1} \Pi_g \left( \sum_{i \in \calG} A_i^{-1} \Pi_i \right)$ is $\PSD$. 
\end{theorem}

This means that we can still guarantee that there is no negative externality for any of the groups in equilibrium in the two cases of interest, namely:
\begin{enumerate}
    \item when the cost matrices are proportional to each other, i.e., $A_i = c_{ij} \cdot A_j$ for all $(i,j) \in \calG^2$ and some scalars $c_{ij} > 0$ (analogue of Proposition~\ref{prop:do-no-harm1} for $m > 2$ groups).
    \item when the subspaces $\calS_1, \dots, \calS_m$ are orthogonal (analogue of Proposition~\ref{prop:do-no-harm2} for $m > 2$ groups).
\end{enumerate}
To derive the aforementioned results, the only change in the proofs of Theorem~\ref{thm:do-no-harm}, and Propositions~\ref{prop:do-no-harm1} and~\ref{prop:do-no-harm2} is that $\vw_{\SW}$ should be substituted with the expression in Equation~\eqref{eq:vwsw-gen}.
 
We proceed to discussing total improvement for $m > 2$ groups. We find it useful to present a slightly generalized version of the \emph{overlap proxy}.

\begin{definition}
Given a scoring rule $\vw \in \bbR^d$ and projections $\Pi_1, \dots, \Pi_m \in \bbR^d$, we define the \emph{overlap proxy} between any two groups $G_i, G_k$ with respect to $\vw$ to be: $r_{i,k}(\vw) \triangleq \| \Pi_i \vw - \Pi_k \vw \|$.
\end{definition}

Using this definition, we can state the direct generalization of Lemma~\ref{lem:overlap-proxy}.

\begin{lemma}
Let $\diff_{j,k} \triangleq |\calI_j(\vw) - \calI_k(\vw)|$ be the disparity in total improvement across groups when the principal's rule is $\vw$. In equilibrium, if $A_j = A_k = \mathbb{I}_{d\times d}$, then: $\diff_{j,k} (\vw_{\SW}) \leq r_{j,k} (\vwst)$. Further, the equality holds if and only if $\Pi_j \vwst$ and $\Pi_k \vwst$ are co-linear.
\end{lemma}

The analogue of Theorem~\ref{thm:eq-imp} for $m > 2$ becomes: 

\begin{theorem}
In equilibrium, the groups obtain equal total improvement for all $\vwst$ if and only if $A_1^{-1} \Pi_1 A_1^{-1} = A_2^{-1} \Pi_2 A_2^{-1} = \dots = A_m^{-1} \Pi_m A_m^{-1}$.
\end{theorem}

Finally, we turn our attention to the per-unit improvement, and we state the analogue of Theorem~\ref{thm:suf-and-nec-per-unit} for $m > 2$ groups. This analogue is again derived using Equation~\eqref{eq:vwsw-gen} for $\vw_{\SW}$.

\begin{theorem}
In equilibrium, group $g$ gets optimal per-unit improvement if and only if: 
\[
\left\langle A_g^{-1}\frac{\Pi_g A_g^{-1}  \vwst}{\left\Vert\Pi_g A_g^{-1} \vwst \right\Vert_2} - A_g^{-1}\frac{\Pi_g\left( \Pi_1 A_1^{-1} +  \Pi_2 A_2^{-1} + \dots + \Pi_m A_m^{-1} \right) \vwst}{\left\Vert\Pi_g\left(\Pi_1 A_1^{-1}  + \Pi_2 A_2^{-1} + \dots + \Pi_m A_m^{-1}\right) \vwst \right\Vert_2},\vwst\right\rangle = 0.
\]
\end{theorem}

Note that this means that, in equilibrium, optimal per-unit outcome improvement is guaranteed if there exists $c_g > 0$, such that: \[ \Pi_g (A_g^{-1} \Pi_g )^\top \vwst = c_g \Pi_g (A_1^{-1} \Pi_1 + \dots + A_m^{-1} \Pi_m )^\top \vwst \] Two notable examples for which this condition holds are:
\begin{enumerate}
    \item when all of $\calS_1, \dots, \calS_m$ are orthogonal to each other
    \item when $A_i = c_{ij} \cdot A_j$ and $\Pi_i = \Pi_j$.
\end{enumerate}

\section{Supplementary Material for Section~\ref{sec:characterizations}}\label{app:characterizations}

\subsection{Omitted Proofs from Subsection \ref{subsec:do-no-harm}} \label{app:do-no-harm}

\begin{proof}[Proof of Theorem~\ref{thm:do-no-harm}]
By Definition \ref{def:per-unit-outcome-improve}, having no negative externality in equilibrium translates to:
\begin{align*}
    \forall g: \; &\calI_g \left(\vw\right) \geq 0 \Leftrightarrow \left\langle A_g^{-1} \Pi_g\frac{\left(A_1^{-1} \Pi_1 + A_2^{-1} \Pi_2 \right)^\top \vwst}{\left\|\left(A_1^{-1} \Pi_1 + A_2^{-1} \Pi_2 \right)^\top \vwst \right\|_2},\vwst\right\rangle \geq 0  &\tag{Lemma~\ref{lem:sw-opt-quasi}} \\
    & \Leftrightarrow \left\langle A_g^{-1} \Pi_g\left(A_1^{-1} \Pi_1 + A_2^{-1} \Pi_2 \right)^\top \vwst,\vwst\right\rangle \geq 0 \\
    &\Leftrightarrow \left\langle \left(\Pi_g^\top {A_g^{-1}}^\top \right)^\top \left(A_1^{-1} \Pi_1 + A_2^{-1} \Pi_2 \right)^\top \vwst,\vwst\right\rangle \geq 0  &\tag{$(A B)^\top = B^\top A^\top$ and ${A^\top}^\top = A$} \\
    &\Leftrightarrow \left\langle \left(\left(A_1^{-1} \Pi_1 + A_2^{-1} \Pi_2\right)\left(\Pi_g^\top {A_g^{-1}}^\top \right) \right)^\top \vwst,\vwst\right\rangle \geq 0  &\tag{$(A B)^\top = B^\top A^\top$}\\
    &\Leftrightarrow \left\langle \left(A_1^{-1} \Pi_1\Pi_g^\top {A_g^{-1}}^\top + A_2^{-1} \Pi_2\Pi_g^\top {A_g^{-1}}^\top \right)^\top \vwst,\vwst\right\rangle \geq 0  &\tag{$(A+B)C = AC + BC$}
\end{align*}
Using the fact that $\Pi_g^\top = \Pi_g$ (as orthogonal projection matrices) and that ${A_g^{-1}}^\top = A_g^{-1}$ (as $A_g$ is a symmetric matrix), we obtain the condition $q(\vwst) \geq 0$ where $q(\vwst) = (\vwst)^\top M \vwst$ is a quadratic form with $M = A_1^{-1} \Pi_1\Pi_g {A_g^{-1}} + A_2^{-1} \Pi_2\Pi_g {A_g^{-1}}$. By standard linear algebra arguments, noting that $(\vwst)^\top M \vwst = ((\vwst)^\top M \vwst)^\top = (\vwst)^\top M^\top \vwst$, we can rewrite $q(\vwst) = \frac{1}{2} (\vwst)^\top \left(M+M^\top\right) \vwst$. The condition then holds for all $\vwst$ if and only if 
\[
M + M^\top 
= A_1^{-1} \Pi_1\Pi_g A_g^{-1} + A_2^{-1} \Pi_2\Pi_g A_g^{-1} + A_{g}^{-1} \Pi_g\Pi_1 A_1^{-1} + A_{g}^{-1} \Pi_g\Pi_2 A_2^{-1}
\]
is $\PSD$. This concludes the proof.
\end{proof}

\begin{proof}[Proof of Corollary~\ref{prop:do-no-harm1}]
Fix a group $g \in \{1,2\}$ (wlog, let $g = 1$), and let $\bA = A_1^{-1}$. Then, from Theorem~\ref{thm:do-no-harm} no negative externality for group $g$ is guaranteed if and only if: 
\begin{align*} 
    \left \langle \left( \bA \, \Pi_1 \, \bA + \frac{1}{c} \bA \, \Pi_2 \Pi_1 \bA \, \right) , \vwst \right \rangle  \geq 0 & \Leftrightarrow  \left( \left( \bA \, \Pi_1 \, \bA + \frac{1}{c} \bA \, \Pi_2 \Pi_1 \, \bA \right)^\top \, \vwst\right)^\top \vwst \geq 0\\
    & \Leftrightarrow  \; {\vwst}^\top \, \left( \bA \, \Pi_1 \, \bA + \frac{1}{c^2} \bA \, \Pi_2 \Pi_1 \, \bA \right) \vwst \geq 0 \numberthis{\label{eq:do-no-harm-simple}}
\end{align*}

Eq.~\eqref{eq:do-no-harm-simple} is true if and only if matrix $\bA \Pi_1 \bA + \bA \Pi_2 \Pi_1 \bA$ is \PSD. Matrix $\Pi_1$ is by definition \PSD. Matrix $A$ is \PD, hence its inverse, $\bA$, is also \PD. As a result, matrix $\bA \Pi_1 \bA$ is \PSD. We shift our attention to matrix $\bA \Pi_2 \Pi_1 \bA$ now. Since $\Pi_1, \Pi_2$ are projection matrices, then the eigenvalues of their product $\Pi_2 \Pi_1$ are non-negative~\citep{ProdProj}. Recall that a matrix is $\PSD$ if and only if its eigenvalues are non-negative. As a result, matrix $\Pi_2 \Pi_1$ is \PSD. Using the same property as above (i.e., that if matrices $A,B$ are \PSD, then so is matrix $A B A$) we can conclude that $\bA \Pi_2 \Pi_1 \bA$ is \PSD. If matrices $A,B$ are \PSD, then so is matrix $A + B$. Hence, matrix $\bA \Pi_2 \bA + \bA \Pi_2 \Pi_2 \bA$ is \PSD, i.e., by definition that for any vector $z$ we have that: $z^\top (\bA \Pi_1 \bA + \bA \Pi_2 \Pi_1 \bA / c) z \geq 0$. This concludes our proof.
\end{proof}

\begin{proof}[Proof of Corollary~\ref{prop:do-no-harm2}]
Fix a group $g \in \{1,2\}$ (wlog let $g = 1$). From Theorem~\ref{thm:do-no-harm} we need: 
\begin{align*}
    \left \langle \left(A_1^{-1} \Pi_1^2 A_1^{-1} + A_2^{-1} \Pi_2 \Pi_1 A_1^{-1} \right)^\top \vwst, \vwst \right \rangle \geq 0 \Leftrightarrow \left \langle \left(A_1^{-1} \Pi_1 A_1^{-1} \right)^\top \vwst, \vwst \right \rangle \geq 0 \numberthis{\label{eq:cor4.2}}
\end{align*}
where for the last inequality we used $\Pi_2 \Pi_1 = 0$ (as subspaces $\calS_1, \calS_2$ are orthogonal) and $\Pi_g^2 = \Pi_g$ (as orthogonal projection matrices). Eq.~\eqref{eq:cor4.2} holds since matrices $\Pi_1$ and $A_1$ are \PSD.
\end{proof}

\subsection{Omitted Proofs from Subsection \ref{subsec:total-outcome-improvement}} \label{app:equal-improvement}

\begin{proof} [Proof of Lemma \ref{lem:overlap-proxy}]
Note that 
\begin{align*}
\calI_1\left(\frac{\left(\Pi_1 + \Pi_2\right) \vwst}{\Vert \left(\Pi_1 + \Pi_2\right) \vwst \Vert}\right) - \calI_2\left(\frac{\left(\Pi_1 + \Pi_2\right) \vwst}{\Vert \left(\Pi_1 + \Pi_2\right) \vwst \Vert}\right) 
&= \left\langle \left(\Pi_1 - \Pi_2 \right) \frac{\left(\Pi_1 + \Pi_2\right) \vwst}{\Vert \left(\Pi_1 + \Pi_2\right) \vwst \Vert}, \vwst \right\rangle
\\&= \frac{1}{\Vert \left(\Pi_1 + \Pi_2\right) \vwst\Vert} \cdot \left\langle \left(\Pi_1 + \Pi_2\right) \vwst, \left(\Pi_1 - \Pi_2 \right) \vwst \right\rangle.
\end{align*}

By Cauchy-Schwarz, we have that 
\begin{align*}
&\left \vert \frac{1}{\Vert \left(\Pi_1 + \Pi_2\right) \vwst\Vert} \cdot \left\langle \left(\Pi_1 + \Pi_2\right) \vwst, \left(\Pi_1 - \Pi_2 \right) \vwst \right\rangle \right \vert
\\&\leq \frac{1}{\Vert \left(\Pi_1 + \Pi_2\right) \vwst\Vert} \cdot 
\Vert \left(\Pi_1 + \Pi_2\right) \vwst \Vert \cdot \Vert \left(\Pi_1 - \Pi_2\right) \vwst \Vert
\\&= \Vert \left(\Pi_1 - \Pi_2\right) \vwst \Vert
\\&= r_{1,2}(\vwst),
\end{align*}
with equality if and only if $\left(\Pi_1 + \Pi_2\right) \vwst$ and $\left(\Pi_1 - \Pi_2 \right) \vwst$ are colinear, i.e., there exists $\alpha \in \mathbb{R}$ such that $\alpha \left(\Pi_1 + \Pi_2\right) \vwst = \left(\Pi_1 - \Pi_2 \right) \vwst$, which can be equivalently written as $(1-\alpha) \Pi_1 \vwst = (1+\alpha) \Pi_2 \vwst$, i.e., $\Pi_1 \vwst$ and $\Pi_2 \vwst$ are colinear.
\end{proof}

\begin{proof}[Proof of Theorem~\ref{thm:eq-imp}]
Equal total outcome improvement across groups is guaranteed in equilibrium if and only if the following holds:
\begin{align*}
&\calI_1 \left( \vw_{\SW} \right) - \calI_2 \left( \vw_{\SW}\right) = 0 \Leftrightarrow &\tag{Definition~\ref{def:per-unit-outcome-improve}}\\
\Leftrightarrow &\left\langle A_1^{-1}\Pi_1\frac{ \left(A_1^{-1} \Pi_1 + A_2^{-1} \Pi_2 \right)^\top \vwst}{\left\|\left(A_1^{-1} \Pi_1 + A_2^{-1} \Pi_2 \right)^\top \vwst \right\|_2} - A_2^{-1}\Pi_2\frac{ \left(A_1^{-1} \Pi_1 + A_2^{-1} \Pi_2 \right)^\top \vwst}{\left\|\left(A_1^{-1} \Pi_1 + A_2^{-1} \Pi_2 \right)^\top \vwst \right\|_2},\vwst\right\rangle = 0  \\
\Leftrightarrow&\left\langle A_1^{-1}\Pi_1 \left(A_1^{-1} \Pi_1 + A_2^{-1} \Pi_2 \right)^\top \vwst - A_2^{-1}\Pi_2 \left(A_1^{-1} \Pi_1 + A_2^{-1} \Pi_2 \right)^\top \vwst,\vwst\right\rangle = 0 \\
\Leftrightarrow &\left\langle \left[A_1^{-1}\Pi_1 \left(A_1^{-1} \Pi_1 + A_2^{-1} \Pi_2 \right)^\top - A_2^{-1}\Pi_2 \left(A_1^{-1} \Pi_1 + A_2^{-1} \Pi_2 \right)^\top\right] \vwst,\vwst\right\rangle = 0 \\
\Leftrightarrow &\left\langle\left(A_1^{-1}\Pi_1\Pi_1^\top {A_1^{-1}}^\top + A_2^{-1}\Pi_2\Pi_1^\top {A_1^{-1}}^\top - A_1^{-1}\Pi_1\Pi_2^\top {A_2^{-1}}^\top - A_2^{-1}\Pi_2\Pi_2^\top {A_2^{-1}}^\top\right)^\top \vwst,\vwst\right\rangle = 0 \\
\Leftrightarrow &\left\langle\left(A_1^{-1}\Pi_1 {A_1^{-1}}+ A_2^{-1}\Pi_2\Pi_1 {A_1^{-1}}^\top - A_1^{-1}\Pi_1\Pi_2 {A_2^{-1}} - A_2^{-1}\Pi_2 {A_2^{-1}}\right)^\top \vwst,\vwst\right\rangle = 0
\end{align*}
where the second transition is due to Lemma~\ref{lem:sw-opt-quasi}, the fourth is due to $Av - Bv = (A-B)v$, the second-to-last one is due to $(A B)^\top = B^\top A^\top$ and ${A^\top}^\top = A$, and the last one is due to the fact that $\Pi_g = \Pi_g^\top$, $\Pi_g^2 = \Pi_g$ and $A_g^{-1} = {A_g^{-1}}^\top$. 
Let $M \triangleq A_1^{-1}\Pi_1 {A_1^{-1}}+ A_2^{-1}\Pi_2\Pi_1 {A_1^{-1}} - A_1^{-1}\Pi_1\Pi_2 {A_2^{-1}} - A_2^{-1}\Pi_2 {A_2^{-1}}$, the above can be written as the quadratic form $q(\vwst) = (\vwst)^\top M \vwst$. In turn, $q(\vwst) = 0$ \emph{simultaneously for all $\vwst$}, i.e $q = 0$, if and only $M$ is skew-symmetric, which means $M + M^\top = 0$. This can be rewritten as 
\begin{align*}
0 &= A_1^{-1}\Pi_1 {A_1^{-1}}+ A_2^{-1}\Pi_2\Pi_1 {A_1^{-1}}^\top - A_1^{-1}\Pi_1\Pi_2 {A_2^{-1}} - A_2^{-1}\Pi_2 {A_2^{-1}} 
\\& + A_1^{-1}\Pi_1 {A_1^{-1}} + A_1^{-1}\Pi_1\Pi_2 {A_1^{-1}}^\top - A_2^{-1}\Pi_2\Pi_1 {A_1^{-1}} - A_2^{-1}\Pi_2 {A_2^{-1}} ,
\end{align*}
or equivalently: $2 A_1^{-1}\Pi_1 {A_1^{-1}} - 2 A_2^{-1}\Pi_2 {A_2^{-1}} = 0$.
This concludes the proof.
\end{proof}

\subsection{Omitted Proofs from Section \ref{subsec:per-unit-outcome-improvement}} \label{app:opt-per-unit}

\begin{proof}[Proof of Proposition~\ref{prop:rel-improvement-arbitrary}]
We focus on a two-dimensional example for clarity of exposition. To abstract away from discrepancies in the cost matrices, we assume that $A_1, A_2 = \mathbbm{I}_{2 \times 2}$, 
and that the projection matrices of the two groups are 
\begin{equation*}
\Pi_1 = 
    \begin{bmatrix}
    1 & 0 \\
    0 & 0
    \end{bmatrix}
    \quad \text{and} \quad 
    \Pi_2 = 
    \begin{bmatrix}
    0 & 0 \\
    0 & 1
    \end{bmatrix}.
\end{equation*}
Next, we select $\eps > 0$ such that $\frac{\eps^2}{1-\eps^2} < \alpha$, and assume that the distribution we face is such that the optimal outcome-decision rule is: $\vwst = \left[\eps, \sqrt{1-\eps^2}\right]^{\top}$. From Lemma~\ref{lem:sw-opt-quasi}, substituting the values of $A_1, A_2, \Pi_1, \Pi_2$ as defined above, we have that the $\vw$ maximizing the social welfare satisfies: 
\[
\vw_{\SW} = \frac{\vwst}{\|\vwst\|_2} = \left[\eps, \sqrt{1-\eps^2}\right]^{\top}.
\]
Substituting $A_1, A_2, \Pi_1, \Pi_2$ in $\Delta_g(\vw) = A_g^{-1}\Pi_g \vw$ we have: $\calI_1(\vw_{\SW}) = \eps^2$ and $\calI_2(\vw_{\SW}) = 1-\eps^2$.

Next, we compute $\ucalI_1(\vw_1)$ and $\ucalI_2(\vw2)$. Substituting $A_1, A_2, \Pi_1, \Pi_2$ in the definition of $\Delta_g(\vw)$ and by Lemma~\ref{lem:group-opt-vw}, we have: $\vw_1 = [1,0]^\top, \vw_2 = [0,1]^\top$. Finally:
\begin{align*}
\ucalI_1(\vw) = \ucalI_1\left(\frac{\Pi_1 \left[\eps, \sqrt{1-\eps^2}\right]^{\top}}{\left\Vert\Pi_1 \left[\eps, \sqrt{1-\eps^2}\right]^{\top}\right\Vert_2}\right) = \ucalI_1\left(\frac{ \left[\eps, 0\right]^{\top}}{\left\Vert \left[\eps, 0\right]^{\top}\right\Vert_2}\right) &= \ucalI_1\left( \left[1, 0\right]^{\top}\right) = \eps = \max_{\vw'}\ucalI_1(\vw')
\end{align*}
\begin{align*}
\ucalI_2(\vw) = \ucalI_1\left(\frac{\Pi_2 \left[\eps, \sqrt{1-\eps^2}\right]^{\top}}{\left\Vert\Pi_2 \left[\eps, \sqrt{1-\eps^2}\right]^{\top}\right\Vert_2}\right) = \ucalI_2\left(\frac{ \left[0, \sqrt{1-\eps^2}\right]^{\top}}{\left\Vert \left[0, \sqrt{1-\eps^2}\right]^{\top}\right\Vert_2}\right) &= \ucalI_2\left( \left[0, 1\right]^{\top}\right) = \sqrt{1-\eps^2} \\
&= \ucalI_2(\vw_2).    
\end{align*}

However, for the total outcome improvement: $\frac{\calI_1(\vw)}{\calI_2(\vw)} = \frac{\eps^2}{1-\eps^2} < \alpha$, which concludes the proof.
\end{proof}

\begin{proof}[Proof of Theorem~\ref{thm:suf-and-nec-per-unit}]
Using Lemmas~\ref{lem:sw-opt-quasi},~\ref{lem:group-opt-vw} and Definition~\ref{def:per-unit-outcome-improve} we get that $\vw_{\SW}$ induces optimal per-unit outcome improvement if and only if:
\begin{align*}
    &\vw_{\SW} = \vw_g = \arg\max_{\vw'} \, \ucalI_g \left( \vw' \right) \Leftrightarrow \\
    \Leftrightarrow \; & \ucalI_g \left( \vw_g\right) - \ucalI_g \left( \vw\right) = 0 \Leftrightarrow \\
    \Leftrightarrow \; & \calI_g \left(\frac{\Pi_g\vw_g}{\left\Vert\Pi_g\vw_g\right\Vert_2}\right) - \calI_g\left(\frac{\Pi_g\vw}{\left\Vert\Pi_g\vw\right\Vert_2}\right) = 0 &\tag{Definition of $\ucalI_g(\cdot)$}\\
    \Leftrightarrow \; &\left\langle A_g^{-1} \Pi_g\frac{\frac{\Pi_g\left(A_g^{-1} \Pi_g \right)^\top \vwst}{\left\Vert\left(A_g^{-1} \Pi_g \right)^\top \vwst \right\Vert_2}}{\left\Vert\frac{\Pi_g\left(A_g^{-1} \Pi_g \right)^\top \vwst}{\left\Vert\left(A_g^{-1} \Pi_g \right)^\top \vwst \right\Vert_2}\right\Vert_2} - A_g^{-1} \Pi_g\frac{\frac{\Pi_g\left(A_1^{-1} \Pi_1 + A_2^{-1} \Pi_2 \right)^\top \vwst}{\left\Vert\left(A_1^{-1} \Pi_1 + A_2^{-1} \Pi_2 \right)^\top \vwst \right\Vert_2}}{\left\Vert\frac{\Pi_g\left(A_1^{-1} \Pi_1 + A_2^{-1} \Pi_2 \right)^\top \vwst}{\left\Vert\left(A_1^{-1} \Pi_1 + A_2^{-1} \Pi_2 \right)^\top \vwst \right\Vert_2}\right\Vert_2},\vwst\right\rangle = 0 &\tag{4}\\
    \Leftrightarrow \; &\left\langle A_g^{-1} \Pi_g\frac{\frac{\Pi_g\left(A_g^{-1} \Pi_g \right)^\top \vwst}{\left\Vert\left(A_g^{-1} \Pi_g \right)^\top \vwst \right\Vert_2}}{\frac{\Vert\Pi_g\left(A_g^{-1} \Pi_g \right)^\top \vwst\Vert_2}{\left\Vert\left(A_g^{-1} \Pi_g \right)^\top \vwst \right\Vert_2}} - A_g^{-1} \Pi_g\frac{\frac{\Pi_g\left(A_1^{-1} \Pi_1 + A_2^{-1} \Pi_2 \right)^\top \vwst}{\left\Vert\left(A_1^{-1} \Pi_1 + A_2^{-1} \Pi_2 \right)^\top \vwst \right\Vert_2}}{\frac{\Vert\Pi_g\left(A_1^{-1} \Pi_1 + A_2^{-1} \Pi_2 \right)^\top \vwst\Vert_2}{\left\Vert\left(A_1^{-1} \Pi_1 + A_2^{-1} \Pi_2 \right)^\top \vwst \right\Vert_2}},\vwst\right\rangle = 0 &\tag{5}\\
    \Leftrightarrow \; &\left\langle A_g^{-1}\Pi_g\frac{\Pi_g\left(A_g^{-1} \Pi_g \right)^\top \vwst}{\left\Vert\Pi_g\left(A_g^{-1} \Pi_g \right)^\top \vwst \right\Vert_2} - A_g^{-1}\Pi_g \frac{\Pi_g\left(A_1^{-1} \Pi_1 + A_2^{-1} \Pi_2 \right)^\top \vwst}{\left\Vert\Pi_g\left(A_1^{-1} \Pi_1 + A_2^{-1} \Pi_2 \right)^\top \vwst \right\Vert_2},\vwst\right\rangle = 0 \\
    \Leftrightarrow \;& \left\langle A_g^{-1}\frac{\Pi_g\left(A_g^{-1} \Pi_g \right)^\top \vwst}{\left\Vert\Pi_g\left(A_g^{-1} \Pi_g \right)^\top \vwst \right\Vert_2} - A_g^{-1} \frac{\Pi_g\left(A_1^{-1} \Pi_1 + A_2^{-1} \Pi_2 \right)^\top \vwst}{\left\Vert\Pi_g\left(A_1^{-1} \Pi_1 + A_2^{-1} \Pi_2 \right)^\top \vwst \right\Vert_2},\vwst\right\rangle = 0 
    &\tag{7}
\end{align*}
where the transitions are given by: (4)\; Lemmas~\ref{lem:sw-opt-quasi} and~\ref{lem:group-opt-vw}, \; (5)\; $\Vert\frac{\mathbf{v}}{c}\Vert_2 = \frac{\Vert\mathbf{v}\Vert_2}{c}$ for any scalar $c$, and (7)\; $\Pi_g\Pi_g = \Pi_g$ as they are orthogonal projections.
\end{proof}

\begin{lemma}\label{lem:suf-per-unit}
In equilibrium, optimal per-unit outcome improvement is guaranteed if there exists $c_g > 0$, such that:
\[
\Pi_g(A_g^{-1}\Pi_g)^\top\vwst = c_g\Pi_g(A_1^{-1}\Pi_1+A_2^{-1}\Pi_2)^\top\vwst.
\]
\end{lemma}

\begin{proof}[Proof of Lemma \ref{lem:suf-per-unit}]
Assume the condition in the statement holds and denote
\[
v = \Pi_g(A_g^{-1}\Pi_g)^\top\vwst = c_g\Pi_g(A_1^{-1}\Pi_1+A_2^{-1}\Pi_2)^\top\vwst.
\]

By Theorem \ref{thm:suf-and-nec-per-unit}, we know that for any group $g\in\{1,2\}$, we are guaranteed Optimal per-unit outcome Improvement if and only if the following holds:
\[
\left\langle A_g^{-1}\frac{\Pi_g\left(A_g^{-1} \Pi_g \right)^\top \vwst}{\left\Vert\Pi_g\left(A_g^{-1} \Pi_g \right)^\top \vwst \right\Vert_2} - A_g^{-1} \frac{\Pi_g\left(A_1^{-1} \Pi_1 + A_2^{-1} \Pi_2 \right)^\top \vwst}{\left\Vert\Pi_g\left(A_1^{-1} \Pi_1 + A_2^{-1} \Pi_2 \right)^\top \vwst \right\Vert_2},\vwst\right\rangle = 0.
\]
Which is in our case equivalent to requiring 
\[
\left\langle A_g^{-1}\frac{v}{\left\Vert v\right\Vert_2} - A_g^{-1} \frac{\frac{v}{c_g}}{\left\Vert\frac{v}{c_g} \right\Vert_2},\vwst\right\rangle = 0.
\]
Equivalently, this can be written as
\[
\left\langle A_g^{-1}\frac{v}{\left\Vert v\right\Vert_2} - A_g^{-1} \frac{v}{\left\Vert v \right\Vert_2},\vwst\right\rangle = 0.
\]
This concludes the proof.
\end{proof}

\begin{proof}[Proof of Corollary \ref{cor:optimal-per-unit-1}]
The condition in Lemma \ref{lem:suf-per-unit} can be written as
\[
\Pi_gA_g^{-1}\vwst = c_g\left(\Pi_g\Pi_1A_1^{-1}+\Pi_g\Pi_1A_2^{-1}\right)\vwst
\]

Since $\Pi_1\Pi_2 = \Pi_2\Pi_1 = 0$, the condition holds for 
\[
c_1 = c_2 = 1
\]
\end{proof}

\begin{proof}[Proof of Corollary \ref{cor:optimal-per-unit-2}]
Assume $\Pi_1=\Pi_2$ and $A_1 = cA_2$. It is immediate that the condition of Lemma \ref{lem:suf-per-unit} holds for 
\[
c_1 = \frac{1}{1+c},\quad c_2 = \frac{c}{1+c}
\]
\end{proof}

\subsection{Omitted Proofs from Section \ref{subsec:joint}} \label{app:joint}

As in many practical settings, the principal is expected to take into account a joint objective of predictive accuracy and social welfare, We show next that under mild conditions, deploying any combination of the social-welfare maximizing solution and the true underlying predictor results in inheriting the Do-No-Harm guarantee of the social welfare maximizer.

We begin by proving three useful lemmas:
\begin{lemma}\label{lem:linearcomb}
Assume Do-No-Harm is guaranteed for group $g \in G$ for each of $\vw^1$,$\vw^2$. Then, Do-No-Harm is guaranteed for $\alpha \vw^1+\beta \vw^2$, for any $\alpha,\beta \geq 0$.
\end{lemma}

\begin{proof} [Proof of Lemma \ref{lem:linearcomb}] 
\begin{align*}
\calI_g\left(\alpha\vw^1 + \beta\vw^2\right) &= \left\langle \Delta_g\left( \alpha\vw^1 + \beta\vw^2\right),\vwst \right\rangle &\tag{Definition \ref{def:per-unit-outcome-improve}}\\
&= \left\langle A_g^{-1}\Pi_g\left(\alpha\vw^1 + \beta\vw^2\right),\vwst \right\rangle &\tag{Lemma \ref{lem:br-computation}}\\
&= \alpha\left\langle A_g^{-1}\Pi_g\vw^1,\vwst \right\rangle + \beta\left\langle A_g^{-1}\Pi_g\vw^2,\vwst \right\rangle &\tag{Linearity} \\
&= \alpha \calI_g\left(\vw^1\right) + \beta \calI_g\left(\vw^2\right) &\tag{Definition \ref{def:per-unit-outcome-improve}}\\
&\geq 0 &\tag{$\calI_g\left(\vw^1\right),\calI_g\left(\vw^2\right)\geq0$}
\end{align*}
\end{proof}

\begin{lemma}\label{lem:do-no-harm-acc}
Assume for some $g \in G$, $\Pi_gA_g^{-1}$ is positive semi-definite. Then Do-No-Harm is guaranteed for group $g$ when the principal deploys $\vwst$.
\end{lemma}

\begin{proof} [Proof of Lemma \ref{lem:do-no-harm-acc}] 
\begin{align*}
\calI_g\left(\vwst\right) &= \left\langle \Delta_g\left( \vwst\right),\vwst \right\rangle &\tag{Definition \ref{def:per-unit-outcome-improve}}\\
&= \left\langle A_g^{-1}\Pi_g\vwst,\vwst \right\rangle &\tag{Lemma \ref{lem:br-computation}}\\
&= {\vwst}^\top\Pi_gA_g^{-1}\vwst &\tag{By definition} \\
&\geq 0 &\tag{$\Pi_gA_g^{-1}\succcurlyeq 0$}
\end{align*}
\end{proof}

\begin{lemma}\label{lem:commute}
Assume for some $g \in G$, $\Pi_g$,$A_g^{-1}$ commute. Then $\Pi_gA_g^{-1}$ is positive semi-definite.
\end{lemma}

\begin{proof} [Proof of Lemma \ref{lem:commute}] 
For any $\vw$,
\begin{align*}
\vw^\top\Pi_gA_g^{-1}\vw 
&= \vw^\top\Pi_g\Pi_gA_g^{-1}\vw &\tag{$\Pi_g = \Pi_g\Pi_g$} \\
&= \vw^\top\Pi_g^\top A_g^{-1}\Pi_g\vw &\tag{$\Pi_g = \Pi_g^\top, \Pi_gA_g^{-1} = A_g^{-1}\Pi_g$}\\
&= \left(\Pi_g\vw\right)^\top A_g^{-1}\Pi_g\vw &\tag{$\vw^\top\Pi_g^\top = \left(\Pi_g\vw\right)^\top$}\\
&\geq 0 &\tag{$A_g^{-1} \succcurlyeq 0$}
\end{align*}
\end{proof}

\begin{observation}
Note that the assumption that $\Pi_gA_g^{-1}$ is positive semi-definite is very intuitive. The reason is that the information recovered by group $g$ regarding the principal's decision rule $\vw$ resides in the subspace defined by $\Pi_g$. 
\end{observation}

\begin{proof} [Proof of Theorem \ref{thm:joint}] 
The theorem follows directly by combining Lemmas \ref{lem:decomposible},
\ref{lem:linearcomb}, \ref{lem:do-no-harm-acc}, and \ref{lem:commute}.
\end{proof}

\begin{proof} [Proof of Corollary \ref{cor:full}]
Given the fact that for all $g$, $\Pi_g$ and $A_g^{-1}$ commute since $\Pi_g = I$, the corollary follows directly from Lemmas \ref{lem:linearcomb}, \ref{lem:do-no-harm-acc}, and \ref{lem:commute}.
\end{proof}

\section{On \texorpdfstring{$\Pi_g$ And $A_g^{-1}$} Commuting} \label{app:commuting}

\begin{lemma} \label{lem:decomposible}
Suppose that for all $x \in S_g$, $A_g x \in S_g$, and for all $x \in S_g^\bot$, $A_g x \in S_g^\bot$. Then, $A_g^{-1}$ and $\Pi_g$ commute. Further,
\[
\COST\left(\vx, \vx';g \right) = (\Pi_g (\vx'-\vx))^\top A_g (\Pi_g (\vx'-\vx)) + (\Pi_g^\bot (\vx'-\vx))^\top A_g (\Pi_g^\bot (\vx'-\vx)).
\]
\end{lemma}

\begin{proof}
Remember that because $S_g^\bot$ is the orthogonal subspace to $S_g$, we have that for all $x$, $x = \Pi_g x + \Pi_g^\bot x$. Now, note that
\[
\Pi_g A_g x = \Pi_g A_g (\Pi_g x) + \Pi_g A_g (\Pi_g^\bot x).
\]
Since $\Pi_g x \in S_g$, we have that $A_g (\Pi_g x) \in S_g$, hence $\Pi_g A_g (\Pi_g x) = A_g (\Pi_g x)$ (since $\Pi_g$ is the orthogonal projection operator onto $S_g$). Further, since $\Pi_g^\bot x \in S_g^\bot$, we have that $A_g (\Pi_g^\bot x) \in S_g^\bot$, leading to $\Pi_g A_g (\Pi_g^\bot x) = 0$. This leads to $\Pi_g A_g x = A_g \Pi_g x$ for all $x$, directly implying that $\Pi_g A_g = A_g \Pi_g$. This can be further rewritten as 
\[
\Pi_g = A_g \Pi_g A_g^{-1}, 
\]
or equivalently 
\[
A_g^{-1} \Pi_g = \Pi_g A_g^{-1},
\]
showing the first part of the result.

The second part of the result follows immediately from
\begin{align*}
\COST\left(\vx, \vx';g \right) 
&= (\vx'-\vx)^\top A_g (\vx'-\vx)
\\& = \left(\Pi_g (\vx'-\vx) + \Pi_g^\bot (\vx'-\vx)\right)^\top A_g \left(\Pi_g (\vx'-\vx) + \Pi_g^\bot (\vx'-\vx)\right)
\\&= (\Pi_g (\vx'-\vx))^\top A_g (\Pi_g (\vx'-\vx)) + (\Pi_g^\bot (\vx'-\vx))^\top A_g (\Pi_g^\bot (\vx'-\vx)) 
\\&+ (\Pi_g (\vx'-\vx))^\top A_g (\Pi_g^\bot (\vx'-\vx)) + (\Pi_g^\bot (\vx'-\vx))^\top A_g (\Pi_g (\vx'-\vx)).
\end{align*}
Since $A_g (\Pi_g^\bot (\vx'-\vx)) \in S_g^\bot$, we have that $\Pi_g (\vx'-\vx))^\top A_g (\Pi_g^\bot (\vx'-\vx)) = (\vx'-\vx)^\top \Pi_g A_g (\Pi_g^\bot (\vx'-\vx)) = 0$. A similar argument holds to show $(\Pi_g^\bot (\vx'-\vx))^\top A_g (\Pi_g (\vx'-\vx)) = 0$, concluding the proof.
\end{proof}

Note that this condition is a fairly natural one: indeed, it shows that the cost of modifying a feature vector $x$ can be decomposed in two independent components: modifying $\Pi_g x$, the part of $x$ that is in $S_g$, and modifying $\Pi_g^\bot x$, the part of $x$ that is in $S_g^\bot$. In turn, it indicates that feature modifications within $S_g$ do not affect feature modifications that within $S_g^\bot$, and vice-versa. One reason for this is that agents in group $g$ \emph{are only aware of $\Pi_g x$ in the first place}, and \emph{only consider feature modifications that are entirely contained within $S_g$}; in this case, feature modifications within $S_g^\bot$ do not matter, as they are never considered by the agents in the first place who are only interested in improving $(\Pi_g w)^\top x = w^\top (\Pi_g x)$ (where $w$ is the deployed rule).

\section{Supplementary Material for Section~\ref{sec:experiments}}\label{app:experiments}

In this section, we study the impact of disparities in access to information about the model not just on their own, but in conjunction with cost disparities and asymmetries of the scoring rule. To do so, we provide additional experimental results on both the $\textsc{Adult}$ and the $\textsc{Taiwan-Credit}$ dataset. 

In Figure~\ref{fig:experiment-random}, we study disparities in improvement on the $\textsc{Adult}$ and the $\textsc{Taiwan-Credit}$ datasets. While we keep using the same data $X_g$, rule $\vwst$, and projection matrices $\Pi_g$ as in Section~\ref{sec:experiments}, we now consider non-identity cost matrices. To do so, we draw $A_g$ uniformly at random; for $A_g$'s the uniform distribution is taken over $[-1, 1]$, coefficient by coefficient. 

\begin{figure}[ht]
\centering
\resizebox{\linewidth}{!}{%
\begin{tikzpicture}
        
       %\pgfplotsset{}
        \begin{axis}[
        name=ax1,
        ybar,
        enlarge x limits = 0.2,
        bar width=7pt,
        ylabel={Improvement},
        legend style={font=\small, at={(-0.1,0.-0.2)},anchor=north west,legend columns=2},
        symbolic x coords={Age, Education, Marriage},
        ymax=1.7,
        ymin=-0.5,
        xtick=data,
        ytick={-0.5, 0, 0.5, 1,1.5},
        ]

\addlegendentry{$\calI_1(\vw)$}
\addplot[black,fill=blue!40!white,postaction={pattern=north east lines}] plot coordinates {
        (Age,       1.05)
        (Education,   1.51)
        (Marriage,  1.18)
        
};

\addlegendentry{$\calI_2(\vw)$}
\addplot [black, fill=red!60!white, postaction={pattern=grid}]
plot coordinates {
        (Age,       -0.05)
        (Education, 0.27)
        (Marriage,  -0.01)
};

 \addlegendentry{$\ucalI_1(\vw)$}
 \addplot[black, fill=yellow, postaction={pattern=dots}] plot coordinates {
         (Age,       1.07)
         (Education, 1.57)
         (Marriage,  1.21)
 };
 
 \addlegendentry{$\ucalI_2(\vw)$}
 \addplot[black, fill=gray] plot coordinates {
         (Age,       -0.08)
         (Education, 0.46)
         (Marriage,  -0.01)
 };

 \addlegendentry{$\ucalI_1({\vw_1})$}
 \addplot[black, fill=magenta!40!white, postaction={pattern=north west lines}] plot coordinates {
         (Age,       1.11)
         (Education, 1.6)
         (Marriage,  1.25)
 };

 \addlegendentry{$\ucalI_2^(\vw_2)$}
\addplot [black, fill=green, postaction={pattern=horizontal lines}]
 plot coordinates {
         (Age,       0.38)
         (Education, 0.59)
         (Marriage,  0.42)
 };
 
\legend{}

\end{axis}

    \begin{axis}[
        at={(ax1.south east)},
        xshift=2cm,
        ybar,
        enlarge x limits = 0.2,
        bar width=7pt,
        ylabel={Improvement},
        legend style={font=\normalsize, nodes={scale=1, transform shape}, at={(0.6,1.2)},anchor=north east,legend columns=6},         symbolic x coords={Age, Country, Education},
        ymax=1.5,
        ymin=0,
        xtick=data,
        ytick={0, 0.5, 1, 1.6},
        ]

\addlegendentry{$\calI_1(\vw_{\SW})$}
\addplot[black,fill=blue!40!white,postaction={pattern=north east lines}] plot coordinates {
        (Age,           0.35)
        (Country,       0.33)
        (Education,     0.23)
};

\addlegendentry{$\calI_2(\vw_{\SW})$}
\addplot [black, fill=red!60!white, postaction={pattern=grid}]
plot coordinates {
        (Age,       0.61)
        (Country,   0.65)
        (Education, 1.43)
};

 \addlegendentry{$\ucalI_1(\vw_{\SW})$}
 \addplot[black, fill=yellow, postaction={pattern=dots}] plot coordinates {
         (Age,       0.41)
         (Country,   0.36)
         (Education, 0.3)
 };

 \addlegendentry{$\ucalI_2(\vw_{\SW})$}
 \addplot[black, fill=gray] plot coordinates {
         (Age,       0.64)
         (Country,   0.67)
         (Education, 1.46)
 };

 \addlegendentry{$\ucalI_1(\vw_1)$}
 \addplot[black, fill=magenta!40!white, postaction={pattern=north west lines}] plot coordinates {
         (Age,       0.62)
         (Country,   0.57)
         (Education, 0.58)
 };
 
\addlegendentry{$\ucalI_2(\vw_2)$}
\addplot [black, fill=green, postaction={pattern=horizontal lines}]
 plot coordinates {
         (Age,       0.8)
         (Country,   0.81)
         (Education, 1.53)
 };

\end{axis}
\end{tikzpicture} 
}
\caption{Left, Right: evaluation on the TAIWAN-CREDIT and ADULT dataset respectively. $A_g$'s are drawn at random.}\label{fig:experiment-random}
\end{figure}
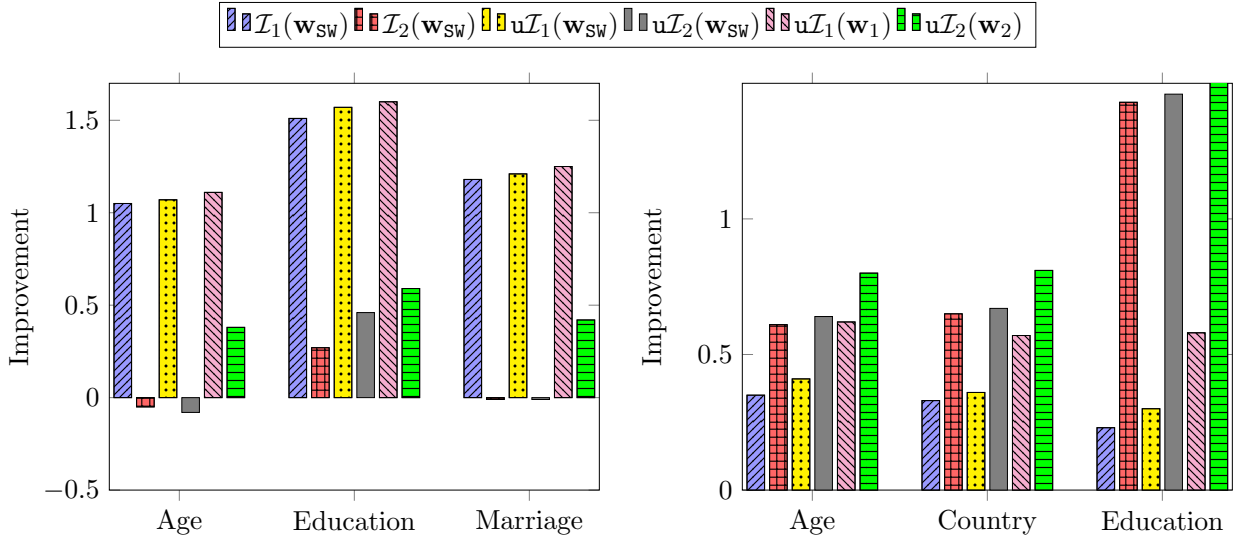

We first note that the scale of the improvements may differ from those of Figure~\ref{fig:experiments}; for example, the difference is striking when looking for example at the ``education'' feature of the left plot, for the TAIWAN-CREDIT dataset. Compared to Figure \ref{fig:experiments} where both total and per-unit outcome improvements are a bit less than $0.5$ for all groups, we see that they are now above $1.5$ for group 2. Significant changes in outcome improvements can also be seen for group 1 on the ``age'' and ``marriage'' feature of the left plot (TAIWAN-CREDIT), and for group 2 on the ``country'' and ``education'' features of the right plot (ADULT). This comes from the fact that information disparities are not the only parameter that have an effect on disparities of improvements across groups: changing the value and magnitude of the $A_g$'s changes which features can be easily modified by agents, and what features give them the best improvement \emph{per level of cost exerted}. This changes which features are desirable to invest in for the agents, and hence for a welfare maximizing principal. 

In the case of the ``Marriage'' feature, because both the total and per-unit improvements are significantly reduced compared to Figure \ref{fig:experiments}, it seems that the disparities we observe are not only due to the fact that the learner may be putting less ``weight'' on group $1$ (defined as the part of the norm of $w_{SW}$ that belongs to group $1$'s information space $S_1$) and more on group $2$; rather, what seems to happen is that the learner focuses on directions in which both groups have information, but that are only ``good'' and easy to modify for group $2$. 

We further observe that the addition of non-identity cost matrices can lead to a \emph{degradation} of outcomes in one of the groups, when the principal optimizes over the joint social welfare. This is visible on the left plot in Figure~\ref{fig:experiment-random}, where the total and per-unit improvements for group $1$ are negative for the ``Age'' feature. This matches the relatively counter-intuitive observation of Section~\ref{subsec:do-no-harm} that optimizing for the social welfare of both groups may hurt the welfare of one of them.

Finally, when comparing the results across age groups in the left plots for Figure \ref{fig:experiments} and Figure \ref{fig:experiment-random} for the ``Age'' feature, we observe a significant reversal of the disparities of improvements across groups: group $1$ was obtaining slightly better outcomes than group $2$ in Figure \ref{fig:experiments}, but group $2$ has significantly worse (in fact, negative) improvements while group $1$ improves slightly more than before in Figure \ref{fig:experiment-random}. This paints a \emph{nuanced} picture that shows that the amount of information that a group has about the scoring rule used by the principal is not the only factor of importance. While having more information is important, how this information interacts with the true model $\vwst$ and the strategic behavior of the agents matters; having a lot of information in directions that have little effect on an agents' true label, or in directions that are very costly for some agents to modify, does not help them when it comes to improving their true labels.

\end{document}